\newtheorem{theorem}{Theorem}[section]
\newtheorem{lemma}[theorem]{Lemma}
\newcommand{\R}{\mathbb{R}}
\newcommand{\goto}{\rightarrow}
\renewcommand{\P}{\operatorname{\mathbb{P}}}
\newcommand{\E}{\operatorname{\mathbb{E}}}
\newcommand{\vct}[1]{\bm{#1}}
\newcommand{\mtx}[1]{\bm{#1}}
\newcommand{\Span}{\operatorname{span}}
\newcommand{\rank}{\operatorname{rank}}
\newcommand{\sgn}{\operatorname{sgn}}
\newcommand{\diag}{\operatorname{diag}}
\newcommand{\tr}{\operatorname{Tr}}
\numberwithin{equation}{section}
\def \endprf{\hfill {\vrule height6pt width6pt depth0pt}\medskip}
\newenvironment{proof}{\noindent {\bf Proof} }{\endprf\par}
\newcommand{\Tp}{T^\perp}
\newcommand{\PT}{\mathcal{P}_T}
\newcommand{\PTp}{\mathcal{P}_{T^\perp}}
\newcommand{\Op}{\Omega^{\perp}}
\newcommand{\PTO}{\mathcal{P}_{T \cap \Omega}}
\newcommand{\cA}{\mathcal{A}}
\title{Sparse Signal Recovery from Quadratic Measurements via Convex Programming}
\author{Xiaodong Li\footnote{Department of Mathematics, Stanford University, Stanford, CA 94305
}~~and Vladislav Voroninski\footnote{  Department of Mathematics, University of California, Berkeley, CA 94709
 }
}
\begin{document}

\date{September 2012}
\maketitle

\begin{abstract}
In this paper we consider a system of quadratic equations $|\langle \vct{z_j}, \vct{x}\rangle|^2=b_j, ~j=1,...,m$, where $\vct
{x} \in \mathbb{R}^n$ is unknown while normal random vectors $\vct{z_j} \in \mathbb{R}^n$ and quadratic measurements $b_j \in \mathbb{R}$ are known. The 
system is assumed to be underdetermined, i.e., $m<n$. We prove that if there exists a sparse solution $\vct{x}$ i.e., at most $k$ components of $\vct{x}$ are non-zero, then by solving a convex 
optimization program, we can solve for $\vct{x}$ up to a multiplicative constant with high probability, provided that $k\leq O(\sqrt{m\over{\log 
n}})$. On the other hand, we prove that $k \leq O(\log n\sqrt{m})$ is necessary for a class of naive convex 
relaxations to be exact.
\end{abstract}

{\bf Keywords.} $\ell_1$-minimization, Trace minimization, Shor's SDP-relaxation, Compressed Sensing, PhaseLift, KKT Condition, 
Approximate Dual Certificate, Golfing Scheme, Random Matrices with IID Rows.

\section{Introduction}

\subsection{Introduction and the main results}
Convex optimization methods have recently been proven to be very successful in solving some classes of linear or quadratic algebraic equations. One classical 
example is compressed sensing (\cite{CRT06, Donoho06}), where a system of underdetermined linear 
equations 
can be solved exactly by using an $\ell_1$-convex relaxation, provided that the unknown vector is sparse. A 
typical result is as follows:
\paragraph{Compressed Sensing}
Suppose $\mtx{A} \in \mathbb{R}^{m \times n}$ has IID $\mathcal{N}(0,1)$ entries and $\vct{x_0} \in \mathbb{R}^n$ satisfies 
$\|\vct{x_0}\|_0=k$ (only $k$ components of $\vct{x}$ are not zeros). If we have linear measurements $\vct{b}=\mtx{A}\vct{x_0}$, then we can recover $
\vct{x}$ 
exactly with high probability by solving
\begin{equation}
\label{eq:L1min}
 \begin{array}{ll}
    \text{minimize}    & \quad \|\vct{x}\|_1 \\
    \text{subject to} & \quad  \vct{b}=\mtx{A}\vct{x}
\end{array}
\end{equation}
provided $k\leq O(m/\log(n/m))$.\\
\\
Another example is a recently proposed semidefinite programming framework for phase retrieval, called PhaseLift \cite
{PhaseLift2}, by which a signal can be exactly recovered-up to a multiplicative constant- from quadratic measurements. The SDP is a combination of 
trace minimization and Shor's SDP-relaxation for quadratic constraints. We review the results in \cite{PhaseLift2, ImprovedPhaseLift} 
below:

\paragraph{PhaseLift}
Fix a signal $\vct{x} \in \mathbb{R}^{n}$. Let $\vct{z_i} \in \mathbb{R}^n$ be IID standard normal random vectors, and 
suppose $b_j, ~j=1,...,m$ are defined as follows:
\begin{equation}
\label{basic system}
b_j=|\langle \vct{z_j}, \vct{x}\rangle|^2,~j=1,...,m,
\end{equation}
If we assume $m\geq C_0n$ for some numerical constant $C_0$, then with high probability, $\vct{x}\vct{x}^T$ is the unique 
solution to the following convex optimization problem:
\begin{equation}
\label{eq:tracemin}
 \begin{array}{ll}
    \text{minimize}    &\quad  \tr(\mtx{X}) \\
    \text{subject to} & \quad  \vct{z_j}^T\mtx{X}\vct{z_j} = b_j,~~j=1,..,m,\\
                                    & \quad \mtx{X} \succeq 0. 
\end{array}
\end{equation}
Notice that $\vct{x}\vct{x}^T$ is feasible since $\vct{x}\vct{x}^T\succeq \mtx{0}$ and
\[
\vct{z_j}^T(\vct{x}\vct{x}^T)\vct{z_j}=|\langle \vct{z_j}, \vct{x}\rangle|^2=b_j, ~j=1,...,m. 
\]

There is an inherent ambiguity to the solution of \eqref{basic system}, since multiplying by a phase factor ($\pm 1$ in the real case) does 
not change measurements. From now on, we only consider solutions modulo multiplication by phase. \\

In this paper , we consider model \eqref{basic system} in the case that $m<<n$. In this regime, \eqref{basic system} does not yield injective 
measurements. In fact, each equation in \eqref{basic system} is the union of two linear equations by assigning different signs, so generally we have $2^m$ solutions. However, if we assume that the unknown vector $\vct{x}$ is $k$-sparse, then under some mild conditions on the number of measurements, system \eqref{basic system} becomes well-posed:

%we consider a system of underdetermined quadratic equations, and the power of a combination of $\ell_1$-minimization, trace minimization and Shor's semidefinite programming relaxation.To be specific
\begin{theorem}
\label{well-posedness}
Let $\vct{x} \in \mathbb{R}^{n}$ be a k-sparse real signal, $\vct{a_i} \in \mathbb{R}^n , i = 1\ldots, m_1$ be generic real 
measurement vectors and let $\vct{y} \in \mathbb{C}^{n}$ be a $k$-sparse complex signal and $\vct{b_i} \in 
\mathbb{C}^n, i = 1\ldots m_2$ be generic complex measurement vectors. Then $m_1 \geq 4k-1$, $m_2 \geq 8k-2$ quadratic 
measurements $\{\left \langle\vct{a_i}, \vct{x}\right \rangle^2\}_{i=1}^{m_1}$, $\{|\left \langle\vct{b_i}, \vct{y}\right \rangle|^2\}_{i=1}^
{m_2}$ 
are sufficient to recover $\vct{x}$ and $\vct{y}$ modulo phase.
\end{theorem}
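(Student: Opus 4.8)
The plan is to read both bounds as genericity statements: I want to show that the set of measurement systems for which recovery modulo phase fails is contained in a proper algebraic subvariety of the parameter space $(\mathbb{R}^n)^{m_1}$, respectively $(\mathbb{C}^n)^{m_2}$, hence is negligible. Recovery of every $k$-sparse signal fails precisely when there exist two $k$-sparse signals, not equal modulo phase, producing identical measurements. Given such a pair $\vct{x},\vct{x}'$, set $S=\supp{\vct{x}}\cup\supp{\vct{x}'}$, so $|S|\le 2k$; since both are supported on $S$, each $\langle\vct{a_i},\vct{x}\rangle$ depends only on the restriction $\vct{a_i}|_S$. So it suffices to prove that, generically, for every coordinate set $S$ with $|S|\le 2k$ the restricted system $\{\vct{a_i}|_S\}_i$ (resp. $\{\vct{b_i}|_S\}_i$) already separates all signals supported on $S$ up to phase. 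There are only finitely many such $S$, and restriction maps generic tuples to generic tuples, so the problem reduces to the dense phase-retrieval question in dimension $d:=|S|\le 2k$, with $m_1\ge 4k-1\ge 2d-1$, respectively $m_2\ge 8k-2\ge 4d-2$, measurements.

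For the real statement I would finish by a pigeonhole argument. Assume $\langle\vct{a_i},\vct{x}\rangle^2=\langle\vct{a_i},\vct{x}'\rangle^2$ for all $i$ with $\vct{x}\ne\pm\vct{x}'$, and put $\vct{u}=\vct{x}-\vct{x}'$, $\vct{v}=\vct{x}+\vct{x}'$; both are nonzero, supported on $S$, and each equation factors as $\langle\vct{a_i},\vct{u}\rangle\langle\vct{a_i},\vct{v}\rangle=0$, so every $\vct{a_i}$ lies in $\vct{u}^\perp$ or in $\vct{v}^\perp$. Partition the $m_1\ge 4k-1$ indices accordingly; one class has at least $\lceil(4k-1)/2\rceil=2k\ge|S|$ members, and for that class the restrictions $\{\vct{a_i}|_S\}$ all lie in a single proper subspace of $\mathbb{R}^S$, namely $(\vct{u}|_S)^\perp$ or $(\vct{v}|_S)^\perp$. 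But the genericity hypothesis forces any $|S|$ of the vectors $\{\vct{a_i}|_S\}$ to be linearly independent, hence to span $\mathbb{R}^S$, contradicting their lying in a hyperplane. The genericity conditions used are just the non-vanishing of the $|S|\times|S|$ minors formed from the $\vct{a_i}|_S$, a finite list of polynomial inequalities, and $4k-1$ is precisely the count that makes the larger class reach size $|S|$. The sub-case $\vct{u}\parallel\vct{v}$ (equivalently $\vct{x}\parallel\vct{x}'$) is disposed of the same way, as then all $\vct{a_i}$ lie in the single hyperplane $\vct{u}^\perp$.

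The complex statement is where the work is, since $|\alpha|^2-|\beta|^2$ does not factor into complex-linear forms and the pigeonhole step disappears. I would lift the problem: writing $\iprod{A}{B}$ for the trace pairing, $|\langle\vct{b_i},\vct{y}\rangle|^2=\iprod{\vct{b_i}\vct{b_i}^{\adj}}{\vct{y}\vct{y}^{\adj}}$, so a collision between $\vct{y}$ and $\vct{y}'$ forces the nonzero Hermitian matrix $\Delta:=\vct{y}\vct{y}^{\adj}-\vct{y}'(\vct{y}')^{\adj}$ — of rank at most $2$ and supported on $S\times S$ with $|S|\le 2k$ — to satisfy $\iprod{\vct{b_i}\vct{b_i}^{\adj}}{\Delta}=0$ for all $i$. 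The Hermitian matrices on $\mathbb{C}^S$ of rank at most $2$ form an algebraic cone of real dimension $4|S|-4\le 8k-4$ inside the $|S|^2$-dimensional space of Hermitian matrices on $S$, and the measurements impose $m_2$ real-linear conditions, given by the rank-one positive-semidefinite functionals $Z\mapsto\iprod{\vct{b_i}\vct{b_i}^{\adj}}{Z}$. An incidence-variety count — the bad $\Delta$'s form a cone, so its intersection with $\bigcap_i\ker\iprod{\vct{b_i}\vct{b_i}^{\adj}}{\cdot}$ is either $\{0\}$ or at least a line — then shows that, for generic $\vct{b_i}$, the intersection is $\{0\}$ as soon as $m_2$ is at least the dimension of that cone; the lower-rank strata of $\Delta$ lie on subvarieties of strictly smaller dimension and cost nothing, and ranging over the finitely many $S$ with $|S|\le 2k$ is free, so $m_2\ge 8k-2$ is comfortably sufficient. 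I expect the principal obstacle to be exactly this complex dimension/transversality count — in particular checking that generic rank-one positive-semidefinite functionals (not merely generic linear functionals) are already in general position with respect to the fixed bad variety, which is the place where a small amount of slack over $8k-4$ is convenient — whereas the real case is the elementary pigeonhole above.
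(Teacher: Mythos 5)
Your reduction is exactly the paper's: pass to $S=\supp{\vct{x}}\cup\supp{\vct{x}'}$ (resp.\ for $\vct{y},\vct{y}'$), note $|S|\le 2k$, observe that restriction of a generic tuple to the finitely many coordinate sets $S$ is again generic, and thereby reduce to dense phase retrieval in dimension $d\le 2k$ with $2d-1$ (real) or $4d-2$ (complex) measurements. The paper finishes at this point by citing the known generic phase retrieval theorem (Theorem 3.1 of the reference [Edidin]); you instead try to reprove the dense statements. Your real argument is complete and correct: the factorization $\langle\vct{a_i},\vct{u}\rangle\langle\vct{a_i},\vct{v}\rangle=0$ with $\vct{u}=\vct{x}-\vct{x}'$, $\vct{v}=\vct{x}+\vct{x}'$, the pigeonhole over $m_1\ge 4k-1$ indices, and the full-spark genericity condition (nonvanishing of all $|S|\times|S|$ minors, a finite list over the finitely many $S$) give the contradiction cleanly, including the degenerate $\vct{u}\parallel\vct{v}$ case. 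This is an elementary, self-contained alternative to the citation and is a genuine plus.

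The complex case, however, has a real gap: the ``incidence-variety count'' you defer is not a routine step but is precisely the content of the theorem the paper cites. Two things are missing. First, the kernels you intersect with the rank-$\le 2$ Hermitian cone are cut out by the structured functionals $Z\mapsto\langle \vct{b_i}\vct{b_i}^{\adj},Z\rangle$, which span only a $(2d-1)$-parameter family per measurement, not a generic linear subspace of the $d^2$-dimensional Hermitian space; so Bertini/Sard-type ``generic section of complementary codimension is trivial'' does not apply directly, and one must instead bound the dimension of the incidence variety $\{([\Delta],\vct{b_1},\dots,\vct{b_{m_2}}): \vct{b_i}^{\adj}\Delta\vct{b_i}=0\ \forall i\}$ fiberwise over the projectivized bad cone (checking, e.g., that $\{\vct{b}:\vct{b}^{\adj}\Delta\vct{b}=0\}$ has real codimension at least one for every nonzero Hermitian $\Delta$ of rank $\le 2$, including the semidefinite strata) and then argue that the projection to the frame space has image of positive codimension. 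Second, because everything here is real (semialgebraic), concluding ``generic'' in the sense of an open dense set requires an additional compactness/closure argument for the failure set; expected-dimension heuristics alone do not yield emptiness of real intersections. These are exactly the steps carried out in the generic phase retrieval literature (Balan--Casazza--Edidin for $4d-2$, Conca--Edidin--Hering--Vinzant for $4d-4$), and you explicitly flag them as an ``expected obstacle'' rather than proving them. As written, the complex half is a plan, not a proof; the quickest repair is to do what the paper does and invoke the known generic $4d-2$ result after your (correct) reduction to dimension $d\le 2k$.
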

By generic we mean an open dense subset of the set of all $m$-element frames in $\mathbb{R}^n$ or $\mathbb{C}^n$.\\
\\
\begin{proof}
We only prove the complex case, since the real case is similar. Assume that there is a $k$-sparse $\vct{y'} \in \mathbb{C}^n$ 
such that $\left|\langle\vct{b_i}, \vct{y'} \rangle \right|^2 = \left| \langle \vct{b_i}, \vct{y} \rangle \right|^2, i = 1,\ldots m_2 \geq 
8k-2$. Let 
$T$ be the union of the supports of $\vct{y}$ and $\vct{y'}$. Clearly $|T| \leq 2k$. Then
\[
\left|\langle\vct{b_i}, \vct{y} \rangle \right|^2=\left| \langle \vct{b_i}, \vct{y'} \rangle \right|^2, ~~i=1,\ldots,m_2, 
\]
which is equivalent to 
\[
\left|\langle {\vct{b_j}}_T, \vct{y}_T \rangle \right|^2=\left| \langle {\vct{b_j}}_T, \vct{y'}_T \rangle \right|^2, ~i=1,...,m_2,
\]
where $\vct{v}_T$ means the restriction of $\vct{v}$ on the support $T$. The genericity of $\vct{b_i}, ~i=1,...,m_2$ implies the 
genericity of ${\vct{b_i}}_T, ~i=1,...,m$. Then since $m_2 \geq4(2k)-2 = 8k-2$ we have $\vct{y}_T=e^{i\psi}\vct{y'}
_T$ 
for some real number $\psi$ by Theorem 3.1 in \cite{Edidin}. Therefore $\vct{y}=e^{i\psi}\vct{y'}$. 
\end{proof}

Injectivity of the measurements of course doesn't imply that efficient recovery is possible. Yet, inspired by the success of convex 
relaxations in compressed sensing and phase retrieval, it is natural to leverage the sparsity assumption to try to efficiently recover signals from fewer than $n$ intensity measurements. A convex formulation in this direction, which, to the best of our knowledge, was first proposed in \cite{CPR} to solve \eqref{basic system}, is the following program:
\begin{equation}
\label{eq:traceL1min}
 \begin{array}{ll}
    \text{minimize}    & \quad \|\mtx{X}\|_1 + \lambda\tr(\mtx{X}) \\
    \text{subject to} & \quad  \vct{z_j}^T\mtx{X}\vct{z_j} = b_j,~~j=1,..,m,\\
                                    & \quad \mtx{X} \succeq 0. 
\end{array}
\end{equation}
The next theorem shows 
that when $\vct{z_j}$ are IID standard normal random vectors,  the solution to \eqref{eq:traceL1min} for 
an 
appropriate choice of $\lambda$, is exactly $\vct{x}\vct{x}^T$, provided that $k\leq O(\sqrt{m\over{\log n}})$.

%Theorem 1.2
\begin{theorem}
\label{thm1}
Fix a signal $\vct{x} \in \mathbb{R}^{n}$ with $\|\vct{x}\|_2 =1$ and $\|\vct{x}\|_0=k$, i.e, only $k$ components of $\vct{x}$ 
are non-zero. Let $\vct{z_i} \in \mathbb{R}^n$ be IID standard normal random vectors, and suppose $b_j, 
~j=1,...,m
$ are defined as in \eqref{basic system}.  Then the solution to the convex program \eqref{eq:traceL1min} is
exact with probability at least $1 - ( 2\log n+3 )(4e^{-\gamma \frac{m}{ 2\log(n) +3}}+{1\over{n^3}})-(5+2n^2)e^{-\gamma m}$, provided
$\lambda >\sqrt{k}\|\vct{x}\|_1+1$, $\lambda < {{n^2}\over{4}}$ and $m>C_0\lambda^2\log n$. Here $C_0$ and $
\gamma$ are numerical constants.
\end{theorem}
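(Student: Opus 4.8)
I would run the standard convex-geometry argument, adapted to the mixed objective $f(\mtx X)=\norm{\mtx X}_1+\lambda\tr(\mtx X)$: write a KKT-type sufficient condition for $\mtx X_0:=\vct x\vct x^T$ to be the unique minimizer of \eqref{eq:traceL1min}, reduce it to the existence of an \emph{approximate dual certificate} in $\range(\mathcal A^*)$, and build that certificate by a golfing scheme. Write $\mathcal A(\mtx X)_j=\vct z_j^T\mtx X\vct z_j=\iprod{\vct z_j\vct z_j^T}{\mtx X}$, let $S=\supp{\vct x}$, $\Omega=S\times S$, and let $T=\{\vct x\vct y^T+\vct y\vct x^T:\vct y\in\R^n\}$ be the tangent space at $\mtx X_0$. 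Since $\norm{\vct x}_2=1$ and $\mtx X_0$ is supported on $\Omega$, the space of symmetric matrices splits orthogonally as $(\TO)\oplus(\TpO)\oplus\Op$, and $\partial f(\mtx X_0)=\{\lambda\Id+\sgn(\mtx X_0)+\mtx W:\supp{\mtx W}\subseteq\Op,\ \norm{\mtx W}_\infty\le1\}$. I would first isolate as a lemma: $\mtx X_0$ is the unique optimum provided \textbf{(a)} a restricted-injectivity property of $\mathcal A$ on matrices supported on $\Omega$, namely $\norm{\mathcal A(\mtx M)}_2\ge c\sqrt m\,\norm{\mtx M}_F$ there (plus the routine companion concentration bound $\norm{\mathcal A(\mtx M)}_2^2\lesssim m(\norm{\mtx M}_F^2+\tr(\mtx M)^2)$), and \textbf{(b)} there is $\mtx Y=\mathcal A^*(\vct\mu)$ with $\norm{\PTO\mtx Y-\mtx G}_F$ below a fixed small constant, $\norm{\PTpO\mtx Y}\le\tfrac12\lambda$, and $\norm{\POp\mtx Y}_\infty\le\tfrac12$, where $\mtx G:=\PTO(\lambda\Id+\sgn(\mtx X_0))=\lambda\vct x\vct x^T+\PTO\sgn(\mtx X_0)$.

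\textbf{The perturbation argument.} To prove the lemma, take a feasible $\mtx H\neq0$: then $\mathcal A(\mtx H)=0$ and $\mtx X_0+\mtx H\succeq0$, so $\PTp\mtx H\succeq0$, and its $S\times S$ block (which annihilates $\vct x$) is $\PTpO\mtx H\succeq0$; hence $\tr\mtx H=\iprod{\vct x\vct x^T}{\mtx H}+\norm{\PTp\mtx H}_*$. Combining the $\ell_1$ subgradient inequality with $0=-\iprod{\mtx Y}{\mtx H}$ and splitting along the orthogonal decomposition, the key point is that $\PTpO\sgn(\mtx X_0)=(\OpId-\PT)\bigl(\sgn(\vct x)\sgn(\vct x)^T\bigr)$ is positive semidefinite (a rank-one PSD matrix minus its tangent projection, still annihilating $\vct x$), so its inner product with $\PTpO\mtx H\succeq0$ is nonnegative and that term drops, leaving
\begin{multline*}
f(\mtx X_0+\mtx H)-f(\mtx X_0)\ \ge\ -\norm{\mtx G-\PTO\mtx Y}_F\,\norm{\PTO\mtx H}_F\\
{}+\bigl(1-\norm{\POp\mtx Y}_\infty\bigr)\norm{\POp\mtx H}_1+\bigl(\lambda-\norm{\PTpO\mtx Y}\bigr)\norm{\PTp\mtx H}_* .
\end{multline*}
From (a), $\mathcal A(\PTO\mtx H)=-\mathcal A(\PTpO\mtx H)-\mathcal A(\POp\mtx H)$ gives $\norm{\PTO\mtx H}_F\lesssim\norm{\PTp\mtx H}_*+\norm{\POp\mtx H}_F+\tr(\POp\mtx H)\le\norm{\PTp\mtx H}_*+2\norm{\POp\mtx H}_1$ (using $\norm{A}_F\le\tr A$ for PSD $A$ and $\tr(\POp\mtx H)\le\norm{\POp\mtx H}_1$ since $\POp\mtx H$ has nonnegative diagonal off $S$), so the first term is swallowed once $\norm{\mtx G-\PTO\mtx Y}_F$ and $\norm{\PTpO\mtx Y}$ are small enough relative to $\lambda$; and if $\POp\mtx H=\PTp\mtx H=0$ then $\mtx H\in\TO$ and injectivity forces $\mtx H=0$. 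This is exactly where $\lambda>\sqrt k\norm{\vct x}_1+1$ is used: $\norm{\mtx G}_F\le\lambda+\sqrt{2k}\norm{\vct x}_1\lesssim\lambda$, so the golfing target has Frobenius norm of order $\lambda$ and must both dominate the cross term $\iprod{\sgn(\mtx X_0)}{\PT\mtx H}$ (of size $\lesssim\sqrt k\norm{\vct x}_1\norm{\PT\mtx H}_F$) and be reachable.

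\textbf{Injectivity and the golfing construction.} Condition (a) is an RIP-type estimate for the $|\Omega|=k^2$-dimensional family supported on $S\times S$; since $\vct z_j\restrict S$ is a $k$-dimensional standard Gaussian and the hypotheses place us in the regime $m\gtrsim k^2\log n$ (i.e.\ $k\le O(\sqrt{m/\log n})$), it follows from a net argument plus Bernstein's inequality for sums of i.i.d.\ subexponential random matrices $\vct z_j\vct z_j^T\restrict{S\times S}$. For (b) I use the moment identity $\mathcal B(\mtx M):=\E\,\iprod{\vct z\vct z^T}{\mtx M}\vct z\vct z^T=2\mtx M+\tr(\mtx M)\Id$, invertible on symmetric matrices. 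Partition the $m$ equations into $L\asymp\log n$ blocks of size $p\asymp m/\log n$, let $\mathcal A_\ell$ be the block operators and $\mathcal B_\ell:=p^{-1}\mathcal A_\ell^*\mathcal A_\ell$, set $\mtx Y_0=0$, and iterate $\mtx Y_\ell=\mtx Y_{\ell-1}+\mathcal A_\ell^*\mathcal A_\ell\,\mathcal B^{-1}\PTO(\mtx G-\mtx Y_{\ell-1})$ (suitably normalized), so that $\PTO(\mtx G-\mtx Y_\ell)=(\OpId-\PTO\mathcal B_\ell\mathcal B^{-1}\PTO)\,\PTO(\mtx G-\mtx Y_{\ell-1})$. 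On the $k$-dimensional space $\TO$ each factor has operator norm $\le\tfrac12$ once $p\gtrsim k\log n$, so $\norm{\PTO\mtx Y_L-\mtx G}_F\le2^{-L}\norm{\mtx G}_F\lesssim2^{-L}\lambda$, which is small because $L\asymp\log n$ and $\lambda<n^2/4$; simultaneously one bounds $\norm{\PTpO\mtx Y_L}$ and $\norm{\POp\mtx Y_L}_\infty$ by summing the per-step deviations, each of size $\lesssim2^{-\ell}\lambda\sqrt{\log n/p}$ times an appropriate norm factor, and requiring these geometric sums to be $\le\tfrac12\lambda$ and $\le\tfrac12$ is exactly where $m>C_0\lambda^2\log n$ enters. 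Finally one completes $\mtx Y_L$ to a genuine KKT triple: choose $\mtx W$ off $\Omega$, vanishingly small on $S^c\times S$ and $S\times S^c$ so that $(\lambda\Id+\sgn(\mtx X_0)+\mtx W-\mtx Y_L)\vct x\approx0$ and equal to $\POp\mtx Y_L$ on $S^c\times S^c$; then $\mtx Z:=\lambda\Id+\sgn(\mtx X_0)+\mtx W-\mtx Y_L$ restricts to $\lambda\,\mathrm{Id}$ on $\mathrm{span}(S^c)$, is positive semidefinite, and $\norm{\mtx W}_\infty<1$, and $(\mtx Y_L,\mtx Z,\mtx W)$ feeds into the lemma.

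\textbf{Main obstacle.} The technical heart is the block concentration in (b): establishing $\PTO\mathcal B_\ell\mathcal B^{-1}\PTO\approx\OpId$ on the relevant low-dimensional subspace and controlling the leakage of $\mathcal A_\ell^*\mathcal A_\ell$ into $\TpO$ and $\Op$. Since $\mathcal A_\ell^*\mathcal A_\ell$ is quartic in the Gaussians $\vct z_j$, the summands are heavy-tailed (sub-exponential of order $1/2$) and matrix Bernstein cannot be applied verbatim; the remedy is truncating $\norm{\vct z_j}$ at scale $\sqrt{n\log n}$ and estimating the discarded tail separately, with careful bookkeeping of the $\lambda$-dependence of every Frobenius and operator norm so that the per-block failure probabilities aggregate to the stated $1-(2\log n+3)\bigl(4e^{-\gamma m/(2\log n+3)}+n^{-3}\bigr)-(5+2n^2)e^{-\gamma m}$. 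A secondary difficulty, reflecting $m<n$, is that $\mathcal A$ can never be controlled on all of $T$, only on the $k$-dimensional piece $\TO$; keeping every estimate confined to $\Omega$ — and using the PSD structure of $\PTp\mtx H$, $\PTpO\mtx H$, and $\PTpO\sgn(\mtx X_0)$ to make the "off-tangent" terms help rather than hurt — is what makes the whole argument close.
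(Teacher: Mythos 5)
Your overall strategy coincides with the paper's: reduce exact recovery to an approximate dual certificate $\mtx Y=\cA^*(\vct v)$ that nearly matches $\lambda\vct x\vct x^T+\PT(\sgn(\vct x)\sgn(\vct x)^T)$ on $T\cap\Omega$, is small in spectral norm on $T^\perp\cap\Omega$ and small entrywise on $\Omega^\perp$, exploit positive semidefiniteness of $\mtx H_{T^\perp}$, $\mtx H_{T^\perp\cap\Omega}$ and $\PTp(\sgn(\vct x)\sgn(\vct x)^T)$, and build $\mtx Y$ by golfing confined to the $k\times k$ block. However, two of your concrete steps would fail as written. First, the ``routine companion bound'' $\norm{\cA(\mtx M)}_2^2\lesssim m(\norm{\mtx M}_F^2+\tr(\mtx M)^2)$ cannot hold uniformly when $m\ll n^2$, and uniformity is required because the feasible perturbation $\mtx H$ is adapted to the data: taking $\mtx M$ to be the $\Omega^\perp$ part of $\vct{z_1}\vct{z_1}^T$ gives $\norm{\cA(\mtx M)}_2\ge\vct{z_1}^T\mtx M\vct{z_1}\approx n^2$ while $\sqrt m\,(\norm{\mtx M}_F+\tr(\mtx M))\approx\sqrt m\,n$. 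The paper avoids $\ell_2$ bounds entirely and works with $\ell_1$ quantities: the PhaseLift-type lower bound $m^{-1}\norm{\cA(\mtx M)}_1\gtrsim\norm{\mtx M}$ for rank-two matrices in $T\cap\Omega$ (all that is needed, since $\mtx H_{T\cap\Omega}$ has rank at most two), the trace bound for PSD matrices, and the uniform H\"older-type bound $m^{-1}\norm{\cA(\mtx M)}_1\le\tfrac98\norm{\mtx M}_1$ of Lemma~\ref{lem:l1}. Relatedly, your restricted injectivity over \emph{all} $\Omega$-supported matrices forces $m\gtrsim k^2$ by a dimension count, which the hypotheses do not grant: $\lambda>\sqrt k\norm{\vct x}_1+1$ only yields $m\gtrsim k\norm{\vct x}_1^2\log n$, which is $\asymp k\log n$ when $\norm{\vct x}_1=O(1)$; your claim that the hypotheses put you in the regime $m\gtrsim k^2\log n$ is true only for flat signals. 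Also note $\POp\mtx H$ is not PSD (only its $B\times B$ block is), so trace-based control of its Frobenius norm is not available; the entrywise $\ell_1$ route sidesteps this too.

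Second, in the golfing step the heavy tail does not sit in $\norm{\vct z_j}_2$ but in the scalar weight $|\langle{\vct{z_j}}_G,\vct u\rangle|^2$, so truncating $\norm{\vct z_j}_2$ at $\sqrt{n\log n}$ accomplishes essentially nothing. The paper truncates the inner product itself at an absolute constant, via $1_{\{|{\vct{z_j}}_G^T\vct u|\le3\}}$, and re-centers with the truncated moments $\beta_2,\beta_4$ in place of your exact identity $\E\,\langle\vct z\vct z^T,\mtx M\rangle\,\vct z\vct z^T=2\mtx M+\tr(\mtx M)\mtx I$; the truncated, rescaled rows are then sub-Gaussian, so the sub-Gaussian-rows theorem yields each block estimate with only $m_i\gtrsim k$ samples. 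Your requirement of $p\gtrsim k\log n$ per block totals $m\gtrsim k\log^2 n$, again overshooting the stated budget when $\norm{\vct x}_1=O(1)$. Finally, the closing attempt to upgrade $\mtx Y_L$ to an exact KKT triple with a PSD slack acting as $\lambda\mtx I$ on the off-support block is neither achievable (the nonzero residual on $T\cap\Omega$ cannot be absorbed by a matrix supported on $\Omega^\perp$ nor by a slack with $\PT(\mtx S)=0$) nor necessary: the approximate-certificate lemma you formulated at the outset is exactly how the paper concludes, and it is the only step where those three smallness conditions are used.
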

Remark 1: By choosing $\lambda=\sqrt{m\over{4C_0\log n}}$, we have exact recovery with probability at least $1 - (2\log n+3)(4e^{-\gamma \frac{m}{ 2\log(n) +3 }}+{1\over{n^3}})-(5+2n^2)e^{-\gamma m}$ if the number of measurements obeys 
$m 
\geq O(\|\vct{x}\|_1^2k\log n)$. Moreover, by choosing $\vct{x}$ to be a k-sparse vector with components $x_i=\pm
{1\over{\sqrt{k}}}$, this reads $m\geq O(k^2\log n)$. \\
\\
Remark 2: In \cite{CPR}, the authors operate under an assumption that the sampling operator satisfies a generalization of the Restricted 
Isometric Property and mutual coherence, while in Theorem \ref{thm1} of our paper we assume the $\vct{z_j}$'s 
are 
IID standard Normal vectors. In our setting the mutual coherence of the sampling operator defined in \cite{CPR} will be on the 
order of $O(1)$, since the diagonal entries of $\vct{z_j}\vct{z_j}^T$ are always $\chi^2$ random variables. 
Applying 
the result in \cite{CPR} we get $k=O(1)$ in our setting, which is a much smaller range of sparsity than considered in the result of the above theorem.\\
\\
The conclusion of Theorem \ref{thm1} is far more restrictive than that of Theorem \ref{well-posedness}, so one may ponder whether \ref{thm1} is optimal. The following result shows 
that indeed there is a substantial gap between solving  \eqref{basic system} and \eqref{eq:traceL1min}. 

%Theorem 1.3
\begin{theorem}
\label{thm2}
Under the setting of Theorem \ref{thm1}, assuming $4 \leq k\leq m \leq {n\over{40\log n}}$, then there is an event $E$ with probability at least $1-{m\over{n^5}}-m e^{-0.09n+0.09k+0.79m}$, such that the following property holds: If there exists a $\lambda \in \mathbb{R}$ such that $\vct{x}\vct{x}^T$ is a minimizer of \eqref{eq:traceL1min}, then we have 
\[
m\geq  \min\left(({k\over 4}-1)^2, {{\max(\|\vct{x}\|_1^2-k/2, 0)^2}\over{500\log^2 n}}\right).
\]
\end{theorem}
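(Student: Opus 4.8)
The plan is to show that when $m$ falls below the stated threshold one can always beat $\vct{x}\vct{x}^\transp$, and to do this uniformly in $\lambda$. First I would record the elementary reduction: if $\vct{x}\vct{x}^\transp$ minimizes \eqref{eq:traceL1min} for some $\lambda$, then (using $\|\vct{x}\vct{x}^\transp\|_1=\|\vct{x}\|_1^2$, $\tr(\vct{x}\vct{x}^\transp)=\|\vct{x}\|_2^2=1$) every feasible $\mtx{X}\succeq0$ obeys $\|\mtx{X}\|_1+\lambda\tr(\mtx{X})\ge\|\vct{x}\|_1^2+\lambda$. Hence it is enough to produce one feasible $\mtx{X}\succeq0$ with $\|\mtx{X}\|_1<\|\vct{x}\|_1^2$ \emph{and} $\tr(\mtx{X})<1$: such an $\mtx{X}$ strictly beats $\vct{x}\vct{x}^\transp$ for every $\lambda\ge0$ (and a quantitatively similar $\mtx{X}$ works for the mildly negative $\lambda$ for which \eqref{eq:traceL1min} is still bounded below).

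Next I would write $\mtx{Z}\in\R^{m\times n}$ for the matrix with rows $\vct{z_j}^\transp$, put $\mtx{G}=\mtx{Z}\mtx{Z}^\transp$, and let $\mtx{G}^{(2)}$ be the entrywise square of $\mtx{G}$, so $(\mtx{G}^{(2)})_{lj}=\langle\vct{z_l},\vct{z_j}\rangle^2$. The natural competitor is $\mtx{X}=\mtx{Z}^\transp\mtx{D}\mtx{Z}=\sum_j d_j\,\vct{z_j}\vct{z_j}^\transp$ with $\mtx{D}=\diag(\vct{d})$: it is automatically PSD once $\vct{d}\ge0$, and $\vct{z_l}^\transp\mtx{X}\vct{z_l}=(\mtx{G}^{(2)}\vct{d})_l$, so it is feasible precisely when $\mtx{G}^{(2)}\vct{d}=\vct{b}$ with $\vct{b}=(b_1,\dots,b_m)^\transp$. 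For $m<n$ and Gaussian $\vct{z_j}$ one has $\mtx{G}^{(2)}=\diag(\|\vct{z_j}\|_2^4)+\mtx{R}$ with the diagonal part $\approx n^2\Id$ and $\|\mtx{R}\|\lesssim nm\ll n^2$, so $\mtx{G}^{(2)}$ is invertible, $\|(\mtx{G}^{(2)})^{-1}\|\approx n^{-2}$, and $\vct{d}=(\mtx{G}^{(2)})^{-1}\vct{b}$ is well defined with $d_j=b_j/\|\vct{z_j}\|_2^4+(\text{small correction})$. The good event $E$ collects the usual $\chi^2$ tail bounds for the $b_j$ and $\|\vct{z_j}\|_2^2$ together with deviation estimates for $\mtx{G}$ and $\mtx{G}^{(2)}$, union bounded over the $\le n^2$ index pairs and the $m$ coordinates; this is what produces the probability $1-m/n^5-me^{-0.09n+0.09k+0.79m}$.

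On $E$ I would then estimate the two objectives of $\mtx{X}$. The trace is $\tr(\mtx{X})=\sum_j d_j\|\vct{z_j}\|_2^2\approx n^{-1}\sum_j b_j\le 2m/n$, which is comfortably below $1$ since $m\le n/(40\log n)$. For the $\ell_1$ norm, $\|\mtx{X}\|_1=\sum_{a,b}\bigl|\sum_j d_j z_{ja}z_{jb}\bigr|$: the diagonal terms contribute only $\tr(\mtx{X})\ll1$, and for $a\ne b$ the inner sum is essentially mean zero with variance $\approx\sum_j d_j^2\approx m/n^4$, so a union bound over the $\le n^2$ off-diagonal positions collapses the trivial $O(m)$ bound $\sum_j d_j\|\vct{z_j}\|_1^2$ down to $\|\mtx{X}\|_1\lesssim n^2\cdot\sqrt{m}\,n^{-2}\,\mathrm{polylog}(n)=\sqrt{m}\,\mathrm{polylog}(n)$; carrying constants this reads $\|\mtx{X}\|_1\le\sqrt{500\,m}\,\log n+k/2$, with the additive $k/2$ absorbing the $\le 2kn$ entries whose index touches $\supp{\vct{x}}$ (these are not mean zero). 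Thus whenever $m<\max(\|\vct{x}\|_1^2-k/2,0)^2/(500\log^2 n)$ we get $\|\mtx{X}\|_1<\|\vct{x}\|_1^2$ and $\tr(\mtx{X})<1$, a contradiction; this is the second term of the claimed minimum. The first term $(k/4-1)^2$ is a fallback: when $m$ is large enough that some $b_j$ is smaller than the off-diagonal correction $\approx m/n$ (so that $d_j<0$ and the above $\mtx{X}$ need not be PSD), one instead argues, in the spirit of Theorem \ref{well-posedness}, by exhibiting a different low-complexity feasible matrix and concludes $\vct{x}\vct{x}^\transp$ can be optimal only if $m\ge(k/4-1)^2$.

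The hard part will be the $\ell_1$ estimate: obtaining $\|\mtx{X}\|_1=O(\sqrt{m}\,\mathrm{polylog}(n))$ rather than $O(m)$ means exploiting the cancellation in the mean-zero off-diagonal sums $\sum_j d_j z_{ja}z_{jb}$ \emph{uniformly} over all $n^2$ pairs $(a,b)$, and doing so while $\vct{d}=(\mtx{G}^{(2)})^{-1}\vct{b}$ is a function of all the $\vct{z_j}$ — so one must first freeze $\vct{d}$ through the expansion $d_j=b_j/\|\vct{z_j}\|_2^4+\text{error}$ and then bound each residual term, which is where the random-matrix estimates with i.i.d.\ rows are needed. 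The second delicate point is certifying $\vct{d}\ge0$ with the stated failure probability, and patching the construction on the exceptional small-$b_j$ event to recover the $(k/4-1)^2$ alternative.
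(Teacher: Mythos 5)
Your overall strategy (exhibit a primal competitor $\mtx{X}=\sum_j d_j\vct{z_j}\vct{z_j}^T$ with $\mtx{G}^{(2)}\vct{d}=\vct{b}$ that beats $\vct{x}\vct{x}^T$) is genuinely different from the paper, which never constructs a competitor: it writes down the KKT/subgradient conditions for $\vct{x}\vct{x}^T$ to be optimal, obtaining a dual certificate $\sum_j c_j\vct{z_j}\vct{z_j}^T=\sgn(\vct{x})\sgn(\vct{x})^T+\mtx{L}_{\Omega^\perp}+\lambda\mtx{I}+\mtx{S}_{T^\perp}$, projects onto $\Gamma$ and $T$, and extracts the bound on $m$ from a $\chi^2$ lemma applied uniformly over all $2^m$ sign patterns of the $c_j$ (this is exactly where the probability term $m e^{-0.09n+0.09k+0.79m}$ comes from) together with a case split on the sign/size of $\lambda$. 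In principle a primal argument could also work, but as written your proposal has two genuine gaps.

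First, the theorem quantifies over all $\lambda\in\mathbb{R}$, and your reduction only handles $\lambda\geq 0$ plus an unspecified ``mildly negative'' range. A competitor with $\tr(\mtx{X})<1$ must satisfy $\|\mtx{X}\|_1-\|\vct{x}\|_1^2<\lambda\,(1-\tr(\mtx{X}))$, which is impossible once $\lambda\lesssim -\|\vct{x}\|_1^2\geq -k$; and the program does not become unbounded below until $\lambda$ is far more negative (vectors in the kernel of $\cA$ orthogonal to all $\vct{z_j}$ satisfy a null-space property, so $\|\vct{v}\|_1^2/\|\vct{v}\|_2^2\gtrsim m/\log(n/m)$ there). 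So there is a whole regime of negative $\lambda$ in which $\vct{x}\vct{x}^T$ could a priori be optimal and your construction says nothing. This is precisely the paper's Case 1 ($\lambda<-k/2$), which is where the branch $(k/4-1)^2$ of the minimum actually originates; you instead attribute that branch to a randomness contingency (some $b_j$ too small) and offer only the placeholder ``exhibit a different low-complexity feasible matrix,'' which is not an argument. Second, the nonnegativity of $\vct{d}=(\mtx{G}^{(2)})^{-1}\vct{b}$ fails with probability that cannot be absorbed into $1-m/n^5-me^{-0.09n+0.09k+0.79m}$: the leading term is $b_j/\|\vct{z_j}\|^4\approx b_j/n^2$ while the off-diagonal correction is of size about $n^{-4}\sum_{l\neq j}\langle\vct{z_j},\vct{z_l}\rangle^2 b_l\approx m/n^3$, so $d_j$ can go negative whenever $b_j\lesssim m/n$; since $b_j\sim\chi^2_1$ and $\min_j b_j$ is typically of order $1/m^2$, this happens with high probability once $m\gtrsim n^{1/3}$, well inside the allowed range $m\leq n/(40\log n)$. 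Until you supply (i) an argument covering negative $\lambda$ that produces the $(k/4-1)^2$ alternative and (ii) a construction (or a different certificate of non-optimality) valid on an event of the stated probability when some $d_j<0$, the proof is incomplete; the paper's dual route sidesteps both difficulties, at the cost of the $2^m$ union bound over sign patterns.
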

Remark: Taking $\vct{x}$ to be a k-sparse vector with components $x_i=\pm{1\over{\sqrt{k}}}$, this reads $m\geq O
(k^2/\log^2 n)$.\\
\\
This theorem obtains sharp theoretical results on the performance of \eqref{eq:traceL1min} in the Gaussian quadratic measurement setting, which may be surprising since it implies that there is a substantial gap between the sufficient number of measurements for injectivity and the necessary number of measurements for recovery via a class of natural convex relaxations. \\

\subsection{Definitions and notations}
In this section we introduce some useful definitions and notations, which will be used in the proofs of Theorems \ref{thm1} and \ref{thm2}. In this paper vectors and matrices are boldfaced while scalars are not. \\
\\
For any positive integer $n_0$, denote $[n_0] = \{1, \ldots, n_0\}$. Let $G=\{i \in [n]: x_i\neq 0\}$ be the support of $\vct{x}$ and $B$ be the complement $G=\{i \in [n]: x_i= 0\}$. Without loss of generality, we assume $G=\{1,...,k\}$. Define the subspaces of symmetric matrices $\{X_{ij}=0, i>k \text{~or~}j>k, \mtx{X}=\mtx{X}^T\}$, $\Gamma=\{\mtx{X}|X_{ij}=0, i\leq k \text{~or~}j\leq k, \mtx{X}=\mtx{X}^T\}$ and $T=\{\vct{x}\vct{x_0}^T+\vct{x_0}\vct{x}^T, \vct{x_0} \in \mathbb{R}^n\}$. In the space of symmetric matrices, we define the inner product $\langle \mtx{X}, \mtx{Y}\rangle=\tr(\mtx{X}\mtx{Y})$. Then for any subspace of symmetric matrices $R$, we denote by $R^\perp$ its orthogonal complement under such an inner product.\\
\\
For the given random vectors $\vct{z_j}, ~j=1,...,m$, let
$\cA: \R^{n \times n} \goto \R^m$ be the linear operator $\cA(\mtx{X}) =
\{\tr(\vct{z}_i \vct{z}_i^T \mtx{X})\}_{i \in [m]}$ for any symmetric matrix $\mtx(X)$. Hence its adjoint is $\cA^*(\vct{y}) = \sum_{i \in [m]} y_i \vct{z}_i
\vct{z}_i^T$. \\
\\
For a symmetric matrix $\mtx{X}$, we put
$\mtx{X}_T$ for the orthogonal projection of $\mtx{X}$ onto $T$ and
similar to $\mtx{X}_{T^\perp}$, $\mtx{X}_{\Omega}$, $\mtx{X}_{\Omega^\perp}$, $\mtx{X}_{\Gamma^\perp}$, $\mtx{X}_
{\Omega \cap T}$ and so on.
For a vector $\vct{v} \in \mathbb{R}^n$, we define $\vct{v}_G=\langle \vct{v}, \vct{e_1} \rangle \vct{e_1}+...+\langle \vct
{v}, \vct{e_k} \rangle \vct{e_k}$ and $\vct{v}_B=\vct{v}-\vct{v}_G$. Here $(\vct{e_1}, ..., \vct{e_n})$ is the 
standard 
basis of $\mathbb{R}^n$.\\
\\
Denote $\|\vct{y}\|_p$ as the $\ell_p$ norm of a
vector $\vct{y}$, where $p$ could be $0$, $1$ or $2$. 
Let $\|\mtx{X}\|$ and  $\|\mtx{X}\|_F$ be the
spectral and Frobenius norms of a matrix $\mtx{X}$, respectively. Moreover, let $\|\mtx{X}\|_\infty$ and $\|\mtx{X}\|_1$ be the 
maximum and the summation of absolute values of all entries of $\mtx{X}$ respectively, i.e., they represent the $\ell_\infty$ and $\ell_1$ norms of the vectorizations of matrices.

%SECTION TWO
\section{The proof of Theorem \ref{thm1}}
%section 2.1
In this section we will prove Theorem \ref{thm1}. First we will cite and prove some supporting lemmas. Then we prove that it 
suffices to construct an approximate dual certificate matrix to the primal convex optimization problem. Finally we use a 
modification of the golfing scheme to construct such an approximate dual certificate with high probability. Both the idea of the 
approximate dual certificate and the golfing scheme are originally due to David Gross' work \cite{Gross09} in Matrix 
completion.

\subsection{Preliminaries} 
In this section we establish some useful properties of $\cA$.
\begin{lemma}[\cite{PhaseLift2}]
\label{lem:PL}
There is an event $E$ of probability at least $1 - 5e^{-\gamma_0 m}$
such that on $E$, any positive symmetric matrix obeys 
\begin{equation}
\label{ineq:l1-trace}
(1 - 1/8) \tr(\mtx{X_\Omega}) \leq m^{-1} \|\cA(\mtx{X_\Omega})\|_1 \le (1 + 1/8) \tr(\mtx{X_\Omega}),  
\end{equation}
and any symmetric rank-2 matrix obeys
\begin{equation}
\label{ineq: l1-lowrank-mlarge}
m^{-1} \|\cA(\mtx{X_\Omega})\|_1 \ge 0.94 (1 - 1/8) \|\mtx{X_\Omega}\|.
\end{equation}
\end{lemma}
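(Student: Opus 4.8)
The plan is to prove this as an $\ell_1$-type restricted-isometry statement for the rank-one ensemble $\{\vct{z}_i\vct{z}_i^T\}$ restricted to symmetric matrices supported on the block $\Omega$; this is the argument of \cite{PhaseLift2}, which I sketch here. One first establishes both bounds pointwise for a fixed matrix by sub-exponential (Bernstein) concentration, then makes them uniform by covering a low-dimensional family of matrices. Since $\cA$, $\tr(\cdot)$ and $\norm{\cdot}$ are positively homogeneous of degree one, and since the principal $k\times k$ submatrix of a positive matrix is again positive, it suffices to prove \eqref{ineq:l1-trace} for all positive $\mtx{X}$ supported on $\Omega$ with $\tr(\mtx{X})=1$, and \eqref{ineq: l1-lowrank-mlarge} for all symmetric $\mtx{X}$ supported on $\Omega$ with $\rank(\mtx{X})\le2$ and $\norm{\mtx{X}}=1$.

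For the pointwise estimates, fix such an $\mtx{X}$ and write its spectral decomposition $\mtx{X}=\sum_j\mu_j\vct{u}_j\vct{u}_j^T$ with each $\vct{u}_j$ supported on $G$. Then $\langle\vct{z}_i\vct{z}_i^T,\mtx{X}\rangle=\vct{z}_i^T\mtx{X}\vct{z}_i=\sum_j\mu_j g_{ij}^2$, where the $g_{ij}=\langle\vct{z}_i,\vct{u}_j\rangle$ are i.i.d.\ $\mathcal{N}(0,1)$, so $\vct{z}_i^T\mtx{X}\vct{z}_i$ is sub-exponential with parameters controlled by $\norm{\mtx{X}}\le\tr(\mtx{X})$. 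In the positive case $\vct{z}_i^T\mtx{X}\vct{z}_i\ge0$, hence $m^{-1}\norm{\cA(\mtx{X})}_1=m^{-1}\sum_i\vct{z}_i^T\mtx{X}\vct{z}_i$ is an average of i.i.d.\ variables of mean $\tr(\mtx{X})=1$, and Bernstein's inequality gives $\P\big(\,|m^{-1}\norm{\cA(\mtx{X})}_1-1|\ge1/8\,\big)\le2\,e^{-cm}$. In the rank-two case the only issue is the lower bound $m^{-1}\sum_i|\vct{z}_i^T\mtx{X}\vct{z}_i|\ge0.94(1-1/8)$, for which the crucial input is the two-variable Gaussian computation $\E\,|\vct{z}_i^T\mtx{X}\vct{z}_i|=\E\,|\mu_1 g_1^2+\mu_2 g_2^2|\ge0.94\,\norm{\mtx{X}}$, the extreme case being $\mu_1,\mu_2$ of opposite sign, for which one evaluates $\min_{0\le t\le1}\E\,|g_1^2-t\,g_2^2|$; with this in hand Bernstein again controls the deviation of the average from its mean.

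To make these uniform I would discretize. For \eqref{ineq:l1-trace} there is a further simplification: if $\mtx{X}\succeq0$ then every entry of $\cA(\mtx{X})$ is nonnegative, so, with the spectral decomposition above ($\mu_j\ge0$, $\supp{\vct{u}_j}\subseteq G$), $\norm{\cA(\mtx{X})}_1=\sum_j\mu_j\norm{\cA(\vct{u}_j\vct{u}_j^T)}_1$ and $\tr(\mtx{X})=\sum_j\mu_j\norm{\vct{u}_j}_2^2$, both linear in the $\mu_j$; hence it suffices to prove \eqref{ineq:l1-trace} for rank-one $\mtx{X}=\vct{u}\vct{u}^T$ with $\vct{u}$ ranging over the unit vectors supported on $G$, i.e.\ over a sphere $S^{k-1}$. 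Likewise the set of symmetric rank-$\le2$ unit-norm matrices supported on $\Omega$ is parametrized by $O(k)$ real coordinates. In both cases one takes an $\varepsilon$-net of cardinality $(C/\varepsilon)^{O(k)}$, applies the pointwise bound with a union bound, and passes to an arbitrary admissible $\mtx{X}$ using that $\mtx{X}\mapsto m^{-1}\norm{\cA(\mtx{X})}_1$ is Lipschitz in Frobenius norm with constant at most $m^{-1}\sum_i\norm{(\vct{z}_i)_G}_2^2$, which is below a constant multiple of $k$ on an event of probability $1-e^{-cm}$; choosing $\varepsilon$ a small multiple of $1/k$ makes the discretization error negligible, and the union-bound loss $e^{O(k\log k)}$ is dominated by the Bernstein factor $e^{-cm}$ in the regime under consideration. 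Intersecting the handful of events produced this way yields one event of probability $1-5e^{-\gamma_0 m}$ on which both \eqref{ineq:l1-trace} and \eqref{ineq: l1-lowrank-mlarge} hold for every admissible $\mtx{X}_\Omega$.

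The Bernstein estimates and the metric-entropy bounds are routine; I expect the main obstacle to be the second inequality, where positivity — and with it the clean reduction to rank-one matrices — is unavailable, so the lower bound must instead come from the anti-concentration fact that $\E\,|\mu_1 g_1^2+\mu_2 g_2^2|\ge0.94\,\norm{\mtx{X}}$ uniformly in the two eigenvalues, together with the bookkeeping needed to keep the Lipschitz constant $m^{-1}\sum_i\norm{(\vct{z}_i)_G}_2^2$ controlled on the \emph{same} event used in the union bound, so that the final failure probability stays of the form $5e^{-\gamma_0 m}$.
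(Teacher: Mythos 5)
The paper does not prove this lemma at all --- it is imported verbatim from \cite{PhaseLift2} --- so the relevant comparison is with the argument in that reference, and your sketch is essentially a faithful reconstruction of it: reduce to the $k\times k$ block, get pointwise bounds by Bernstein for sub-exponential variables (with the constant $0.94$ coming from $\min_{0\le t\le 1}\E|g_1^2-tg_2^2|$ for the indefinite rank-two case), and pass to uniform bounds by an $\varepsilon$-net plus a Lipschitz estimate. The outline is correct; note only that the lemma implicitly needs $m\ge Ck$ (it cannot hold for $m<k$, e.g.\ take $\vct{u}\vct{u}^T$ with $\vct{u}$ orthogonal to all $(\vct{z}_i)_G$), and that your crude Lipschitz constant $m^{-1}\sum_i\|(\vct{z}_i)_G\|_2^2\approx k$ forces $\varepsilon\sim 1/k$ and a net of size $e^{O(k\log k)}$, i.e.\ an implicit hypothesis $m\ge Ck\log k$, whereas the cited proof bounds the increment by $m^{-1}\|\cA(\mtx{X}-\mtx{Y})\|_1\le(1+1/8)\|\mtx{X}-\mtx{Y}\|_F\cdot 2$ for rank-$\le 4$ differences (using the already-established \eqref{ineq:l1-trace}), which gives an $O(1)$ Lipschitz constant, a constant $\varepsilon$, and only $m\ge Ck$; in the regime of Theorem \ref{thm1}, where $m\gtrsim k\log n$, this distinction is immaterial.
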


\begin{lemma}
\label{lem:l1}
There is an event $E$ of probability at least $1 - 2n^2e^{-\gamma_0 m}$
such that on $E$, any symmetric matrix obeys 
\begin{equation}
\label{ineq: l1-lowrank-msmall}
m^{-1} \|\cA(\mtx{X})\|_1\leq {9\over 8}\|\mtx{X}\|_1.
\end{equation}
\end{lemma}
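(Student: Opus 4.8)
The plan is to bound $m^{-1}\|\cA(\mtx{X})\|_1$ entrywise and then appeal to a concentration-plus-union-bound argument. Writing $\cA(\mtx{X})_i = \tr(\vct{z}_i\vct{z}_i^T\mtx{X}) = \vct{z}_i^T\mtx{X}\vct{z}_i = \sum_{p,q} X_{pq}(z_i)_p(z_i)_q$, we have by the triangle inequality
\[
m^{-1}\|\cA(\mtx{X})\|_1 = m^{-1}\sum_{i=1}^m \left|\sum_{p,q} X_{pq}(z_i)_p(z_i)_q\right| \le \sum_{p,q}|X_{pq}|\cdot\left(m^{-1}\sum_{i=1}^m |(z_i)_p(z_i)_q|\right).
\]
So it suffices to show that, on an event of the stated probability, $m^{-1}\sum_{i=1}^m|(z_i)_p(z_i)_q| \le 9/8$ simultaneously for all pairs $(p,q)\in[n]^2$; the claimed bound then follows since $\sum_{p,q}|X_{pq}| = \|\mtx{X}\|_1$.

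For a fixed pair $(p,q)$, the variables $|(z_i)_p(z_i)_q|$, $i=1,\dots,m$, are IID nonnegative random variables with mean $\E|(z_1)_p(z_1)_q|$. When $p\ne q$ this mean is $(\E|g|)^2 = 2/\pi < 1$, and when $p=q$ it is $\E g^2 = 1$; in either case the mean is at most $1$. Each $|(z_i)_p(z_i)_q|$ is subexponential (a product of two Gaussians, hence its moments grow appropriately), so a standard Bernstein-type tail bound gives $\Prob\big(m^{-1}\sum_{i=1}^m |(z_i)_p(z_i)_q| > 9/8\big) \le e^{-\gamma_0 m}$ for a suitable numerical constant $\gamma_0$, using that $9/8$ exceeds the mean by a constant margin. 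Taking a union bound over the $n^2$ ordered pairs $(p,q)$ — or the $\le 2n^2$ count appearing in the statement, which generously covers the $n^2$ pairs — yields the event $E$ of probability at least $1 - 2n^2 e^{-\gamma_0 m}$ on which $m^{-1}\sum_i|(z_i)_p(z_i)_q|\le 9/8$ for every $(p,q)$, and hence $m^{-1}\|\cA(\mtx{X})\|_1 \le (9/8)\|\mtx{X}\|_1$ for every symmetric $\mtx{X}$ at once.

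The only slightly delicate point is the subexponential tail estimate for the product of two correlated (when $p=q$, perfectly correlated) standard Gaussians and the verification that the deviation level $9/8$ sits strictly above the mean by enough to absorb into a clean $e^{-\gamma_0 m}$ bound; this is routine but is where the numerical constant $\gamma_0$ is pinned down. Everything else — the entrywise triangle-inequality reduction and the union bound — is immediate, and crucially the resulting event is uniform over all $\mtx{X}$ because the randomness has been isolated into the finitely many scalars $m^{-1}\sum_i|(z_i)_p(z_i)_q|$.
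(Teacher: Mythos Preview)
Your proof is correct and follows essentially the same approach as the paper: the entrywise triangle-inequality reduction to $\max_{p,q} m^{-1}\sum_i |(z_i)_p(z_i)_q|$, the observation that these are IID sub-exponential with mean $1$ or $2/\pi$, and a Bernstein-type concentration (the paper cites Proposition~5.16 of \cite{VershyninRMT}) followed by a union bound over the $n^2$ pairs. The only cosmetic difference is that the paper invokes the named proposition explicitly, while you describe the same bound in words.
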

\begin{proof}
%Now we use the following notations: for any matrix $X$, $|X|$ denote the matrix whose entries are the absolute values of 
%corresponding entries of $X$; and for any vector $x$, $|x|$ denote the matrix whose components are the 
%absolute values of corresponding components of $x$. 
By direct calculation,  we have
 \begin{align*}
 \frac{1}{m} \| \cA(\mtx{X}) \|_1 &= \frac{1}{m}  \sum_{j=1}^m \left| \left<\mtx{X},\vct{ z_j}\vct{ z_j}^T\right>\right|\leq 
\frac{1}{m}\sum_{j=1}^m \sum_{a,b} |X_{ab}z_{ja}{z_{jb}}|\leq \max_{a,b}{1\over m}(\sum_{j=1}^m|z_{ja}{z_{jb}}|)
\|
\mtx{X}\|_1.
 \end{align*}
 Since $|z_{ja}{z_{jb}}|, ~ j=1...,m$ are IID sub-exponential variables with expectation $1$ or ${2\over \pi}$ and have finite $\psi_1$-norm. By Proposition 5.16 of \cite{VershyninRMT}, we have
 \[
 \max_{a,b}{1\over m}(\sum_{j=1}^m|z_{ja}{z_{jb}}|)\leq {9/8}
 \]
 with probability at least $1 - 2n^2e^{-\gamma_0 m}$. On this event we have $m^{-1} \|\cA(\mtx{X})\|_1\leq {9\over 8}\|
\mtx{X}\|_1$.
\end{proof}

%section 2.2
\subsection{Exact recovery by the existence of an approximate dual certificate.}
In the classical theory of semidefinite programming, the existence of an exact dual certificate can be used to prove that a 
specific point is the solution to the primal problem. By using an idea in \cite{Gross09}, in order to prove Theorem 
\ref
{thm1}, it suffices to prove the existence of an approximate dual certificate.

\begin{lemma}
\label{exact recovery}
Denote $\mtx{X_0}=\lambda \vct{x}\vct{x}^T+\PT(\sgn(\vct{x})\sgn(\vct{x})^T)$. Suppose there exists $\mtx{Y}=v_1\vct
{z_1}\vct{z_1}^T+...+v_m\vct{z_m}\vct{z_m}^T$ for some real numbers $v_1, ..., v_m$ satisfying
$\|\mtx{Y}_{T\cap \Omega}-X_0\|_F\leq{{\|\mtx{X_0}\|_F}\over{6n^2}}$, $\|\mtx{Y}_{T^\perp \cap \Omega}\|\leq {{\|
\mtx{X_0}\|_F}\over 5}$ and $\|\mtx{Y}_{\Omega^{\perp}}\|_\infty\leq {{C\sqrt{\log n}}\over{\sqrt{m}}}\|
\mtx
{X_0}\|_F$,
with some numerical constant $C$. Then assuming that $\cA$ satisfies properties \eqref {ineq:l1-trace}, \eqref{ineq: l1-lowrank-mlarge} and \eqref{ineq: l1-lowrank-msmall}, we have that $\vct{x}\vct{x}^T$ is the unique solution to the 
convex 
program (\ref{eq:traceL1min}), provided that $\lambda >\sqrt{k}\|\vct{x}\|_1+1$, $\lambda < {{n^2}\over{4}}$ and 
$m>64C^2\lambda^2\log n$.
\end{lemma}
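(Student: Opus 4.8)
The plan is the standard convex-analytic argument: we show $\vct{x}\vct{x}^T$ is the unique minimizer of \eqref{eq:traceL1min} by proving that the objective increases strictly along every admissible perturbation. So fix a feasible $\mtx{X}=\vct{x}\vct{x}^T+\mtx{H}$ with $\mtx{H}\neq \mtx{0}$. Feasibility means $\cA(\mtx{H})=\mtx{0}$ and $\mtx{X}\succeq \mtx{0}$, and since $\vct{x}\vct{x}^T$ is rank one the latter gives $\mtx{H}_{T^\perp}=(\Id-\vct{x}\vct{x}^T)\mtx{H}(\Id-\vct{x}\vct{x}^T)\succeq \mtx{0}$. Throughout I will use a few elementary geometric facts: $\mtx{X_0}\in T\cap\Omega$; writing $\mtx{W_0}=\lambda\vct{x}\vct{x}^T+\sgn(\vct{x})\sgn(\vct{x})^T$, one has $\mtx{W_0}=\mtx{X_0}+\vct{u}\vct{u}^T$ with $\vct{u}=\sgn(\vct{x})-\|\vct{x}\|_1\vct{x}$, so $\vct{u}\vct{u}^T=\PTp(\sgn(\vct{x})\sgn(\vct{x})^T)$ is a PSD matrix lying in $T^\perp\cap\Omega$; the space $\Omega$ splits orthogonally as $(T\cap\Omega)\oplus(T^\perp\cap\Omega)$; and $\mtx{H}_{T^\perp\cap\Omega}$ is the $G\times G$ principal block of $\mtx{H}_{T^\perp}$, hence PSD with $\tr(\mtx{H}_{T^\perp\cap\Omega})\le\tr(\mtx{H}_{T^\perp})$.

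First I would record the basic subgradient lower bound on the objective gap $\Delta:=\|\vct{x}\vct{x}^T+\mtx{H}\|_1+\lambda\tr(\vct{x}\vct{x}^T+\mtx{H})-\|\vct{x}\vct{x}^T\|_1-\lambda\tr(\vct{x}\vct{x}^T)$. Since $\sgn(\vct{x})\sgn(\vct{x})^T$, completed off $\Omega$ by the sign pattern of $\mtx{H}_{\Omega^\perp}$, is a subgradient of $\|\cdot\|_1$ at $\vct{x}\vct{x}^T$, and $\lambda\tr(\mtx{H})=\lambda\langle\vct{x}\vct{x}^T,\mtx{H}\rangle+\lambda\tr(\mtx{H}_{T^\perp})$ with $\tr(\mtx{H}_{T^\perp})\ge0$, one obtains $\Delta\ge\langle\mtx{W_0},\mtx{H}_\Omega\rangle+\|\mtx{H}_{\Omega^\perp}\|_1+\lambda\tr(\mtx{H}_{T^\perp})$. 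Now I bring in the approximate certificate $\mtx{Y}=v_1\vct{z_1}\vct{z_1}^T+\cdots+v_m\vct{z_m}\vct{z_m}^T=\cA^*(\vct{v})$: because $\cA(\mtx{H})=\mtx{0}$ we have $\langle\mtx{Y},\mtx{H}\rangle=0$, so $|\langle\mtx{Y}_\Omega,\mtx{H}_\Omega\rangle|=|\langle\mtx{Y}_{\Omega^\perp},\mtx{H}_{\Omega^\perp}\rangle|\le\|\mtx{Y}_{\Omega^\perp}\|_\infty\|\mtx{H}_{\Omega^\perp}\|_1$. Splitting $\langle\mtx{W_0},\mtx{H}_\Omega\rangle=\langle\mtx{W_0}-\mtx{Y}_\Omega,\mtx{H}_\Omega\rangle+\langle\mtx{Y}_\Omega,\mtx{H}_\Omega\rangle$ and, along $\Omega=(T\cap\Omega)\oplus(T^\perp\cap\Omega)$, writing $\mtx{W_0}-\mtx{Y}_\Omega=(\mtx{X_0}-\mtx{Y}_{T\cap\Omega})+(\vct{u}\vct{u}^T-\mtx{Y}_{T^\perp\cap\Omega})$, the first summand contributes at least $-\|\mtx{X_0}-\mtx{Y}_{T\cap\Omega}\|_F\|\mtx{H}_\Omega\|_F\ge-\tfrac{\|\mtx{X_0}\|_F}{6n^2}\|\mtx{H}_\Omega\|_F$ and the second contributes $\langle\vct{u}\vct{u}^T,\mtx{H}_{T^\perp\cap\Omega}\rangle-\langle\mtx{Y}_{T^\perp\cap\Omega},\mtx{H}_{T^\perp\cap\Omega}\rangle\ge 0-\|\mtx{Y}_{T^\perp\cap\Omega}\|\,\tr(\mtx{H}_{T^\perp\cap\Omega})\ge-\tfrac{\|\mtx{X_0}\|_F}{5}\tr(\mtx{H}_{T^\perp})$, the first inequality here using that $\langle\vct{u}\vct{u}^T,\mtx{H}_{T^\perp\cap\Omega}\rangle\ge0$ as a pairing of PSD matrices. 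Collecting terms and using $\|\mtx{Y}_{\Omega^\perp}\|_\infty\le\tfrac{C\sqrt{\log n}}{\sqrt m}\|\mtx{X_0}\|_F$,
\[
\Delta\ \ge\ \Big(1-\tfrac{C\sqrt{\log n}}{\sqrt m}\|\mtx{X_0}\|_F\Big)\|\mtx{H}_{\Omega^\perp}\|_1+\Big(\lambda-\tfrac{\|\mtx{X_0}\|_F}{5}\Big)\tr(\mtx{H}_{T^\perp})-\tfrac{\|\mtx{X_0}\|_F}{6n^2}\|\mtx{H}_\Omega\|_F.
\]

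It remains to see the hypotheses force the right-hand side to be strictly positive unless $\mtx{H}=\mtx{0}$. A one-line computation from $\mtx{X_0}=(\lambda-\|\vct{x}\|_1^2)\vct{x}\vct{x}^T+\|\vct{x}\|_1(\vct{x}\sgn(\vct{x})^T+\sgn(\vct{x})\vct{x}^T)$, together with $\|\vct{x}\|_1\le\sqrt k$ and $\lambda>\sqrt k\|\vct{x}\|_1+1>\|\vct{x}\|_1^2$, gives $\|\mtx{X_0}\|_F<3\lambda$; hence $\lambda-\|\mtx{X_0}\|_F/5>2\lambda/5>0$, while $m>64C^2\lambda^2\log n$ makes $1-\tfrac{C\sqrt{\log n}}{\sqrt m}\|\mtx{X_0}\|_F>1-\tfrac{3C\lambda\sqrt{\log n}}{\sqrt m}>\tfrac58$. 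For the remaining negative term I bound $\|\mtx{H}_\Omega\|_F$ by the two ``good'' quantities: $\|\mtx{H}_\Omega\|_F\le\|\mtx{H}_{T\cap\Omega}\|_F+\|\mtx{H}_{T^\perp\cap\Omega}\|_F$ with $\|\mtx{H}_{T^\perp\cap\Omega}\|_F\le\tr(\mtx{H}_{T^\perp})$, and for the rank-at-most-$2$ matrix $\mtx{H}_{T\cap\Omega}$ I combine \eqref{ineq: l1-lowrank-mlarge} with $\cA(\mtx{H}_{T\cap\Omega})=-\cA(\mtx{H}_{T^\perp\cap\Omega})-\cA(\mtx{H}_{\Omega^\perp})$, then apply \eqref{ineq:l1-trace} to the PSD matrix $\mtx{H}_{T^\perp\cap\Omega}$ and \eqref{ineq: l1-lowrank-msmall} to $\mtx{H}_{\Omega^\perp}$, obtaining $\|\mtx{H}_{T\cap\Omega}\|_F\le C'\big(\tr(\mtx{H}_{T^\perp})+\|\mtx{H}_{\Omega^\perp}\|_1\big)$ for an absolute constant $C'<3$. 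Since $\|\mtx{X_0}\|_F<3\lambda<\tfrac34 n^2$, the residual term is then at most $\tfrac{C'+1}{8}\big(\tr(\mtx{H}_{T^\perp})+\|\mtx{H}_{\Omega^\perp}\|_1\big)$, and tracking constants (using $\lambda>2$) shows $\Delta\ge c_1\|\mtx{H}_{\Omega^\perp}\|_1+c_2\tr(\mtx{H}_{T^\perp})$ with numerical $c_1,c_2>0$. This is positive unless $\|\mtx{H}_{\Omega^\perp}\|_1=\tr(\mtx{H}_{T^\perp})=0$; in that case $\mtx{H}_{T^\perp}=\mtx{0}$ (a PSD matrix of vanishing trace), so $\mtx{H}=\mtx{H}_{T\cap\Omega}$, and the displayed bound on $\|\mtx{H}_{T\cap\Omega}\|_F$ forces $\mtx{H}=\mtx{0}$, a contradiction. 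Hence $\vct{x}\vct{x}^T$ is the unique minimizer of \eqref{eq:traceL1min}.

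The step I expect to be the crux is the control of $\|\mtx{H}_\Omega\|_F$ by $\|\mtx{H}_{\Omega^\perp}\|_1$ and $\tr(\mtx{H}_{T^\perp})$: this is precisely where the three ``restricted-isometry''-type properties of $\cA$ from Lemmas \ref{lem:PL} and \ref{lem:l1} get consumed, and it is also what explains the $O(1/n^2)$ slack in the first certificate condition — one can only recover $\|\mtx{H}_\Omega\|_F$ up to an $O(1)$ factor, which must then be absorbed against $\|\mtx{X_0}\|_F/n^2=O(\lambda)/n^2=O(1)$ — while the ancillary bookkeeping burden is keeping every numerical constant small enough that the final coefficients $c_1,c_2$ stay positive.
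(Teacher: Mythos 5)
Your proof is correct and is essentially the paper's argument in a different packaging: you use the same subgradient inequality, the same zero pairing $\langle \mtx{Y},\mtx{H}\rangle=0$ decomposed over $T\cap\Omega$, $T^\perp\cap\Omega$, $\Omega^\perp$, the same PSD consequences of feasibility, and the same chaining of \eqref{ineq: l1-lowrank-mlarge}, \eqref{ineq:l1-trace} and \eqref{ineq: l1-lowrank-msmall} to control the rank-two $T\cap\Omega$ component by $\tr(\mtx{H}_{T^\perp\cap\Omega})+\|\mtx{H}_{\Omega^\perp}\|_1$, together with the same bound $\|\mtx{X_0}\|_F\leq \lambda+\|\vct{x}\|_1^2+2\sqrt{k}\|\vct{x}\|_1<3\lambda$. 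The only difference is organizational: you show the objective gap is strictly positive for every nonzero feasible perturbation, while the paper plays the lower bound $\|\mtx{H}_{T\cap\Omega}\|_F\geq 3(\tr(\mtx{H}_{T^\perp\cap\Omega})+\|\mtx{H}_{\Omega^\perp}\|_1)$ against the upper bound with constant $2.5$ to force $\mtx{H}=0$.
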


\begin{proof}
 Let $\mtx{\hat{X}}$ be the solution to the convex program (\ref{eq:traceL1min}) and let $\mtx{H}=\mtx{\hat{X}}-\vct{x}\vct
{x}^T$. Then by the feasibility condition of the convex program (\ref{eq:traceL1min}) , we have
 \begin{equation}
 \label{equality constraint}
 \cA(\mtx{H})=0,
\end{equation} 
and
\begin{equation}
\label{inequality constraint}
 \vct{x}\vct{x}^T+ \mtx{H} \succeq 0. 
 \end{equation}
 By inequality (\ref{inequality constraint}), we have 
\begin{equation}
\label{ineq: positive}
\mtx{H}_{T^\perp\cap \Omega}\succeq 0, ~~~\mtx{H}_{B}\succeq 0 ~~\text{and}~~\mtx{H}_{T^\perp}\succeq 0.
\end{equation}  
By equality (\ref{equality constraint}), we have $\cA(\mtx{H}_{T\cap \Omega})=\cA(\mtx{H}_{T^\perp \cup \Omega^
\perp})$. Then by (\ref{ineq:l1-trace}), (\ref{ineq: l1-lowrank-mlarge}), (\ref{ineq: l1-lowrank-msmall}) and (\ref
{ineq: 
positive}), we have
\begin{align*}
\|\mtx{H}_{T\cap \Omega}\| &\leq {1\over{0.94\times (7/8)}}\frac{1}{m}\|\cA(\mtx{H}_{T\cap \Omega})\|_1 \notag\\
                                                           & \leq \frac{1.3}{m}\|\cA(\mtx{H}_{T^\perp \cup \Omega^\perp})\|_1 \notag\\
                                                           & \leq \frac{1.3}{m}(\|\cA(\mtx{H}_{T^\perp\cap \Omega})\|_1+\|\cA(\mtx{H}_{\Omega^
\perp})\|_1) \notag\\
                                                           &\leq  {1.3\times(9/8)}\left( \tr(\mtx{H}_{T^\perp\cap \Omega}) + \|\mtx{H}_{\Omega^
\perp}\|_1 \right).
\end{align*}
Since $\rank(\mtx{H}_{T\cap \Omega})\leq 2$, we have
\begin{equation}
\label{ineq: upperbound}
\|\mtx{H}_{T\cap \Omega}\|_F\leq \sqrt{2}\|\mtx{H}_{T\cap \Omega}\| \leq 2.5\left( \tr(\mtx{H}_{T^\perp\cap 
\Omega}) + \|\mtx{H}_{\Omega^\perp}\|_1 \right).
\end{equation}

Now let's see what inequalities about $\mtx{H}$ we can get from the objective function. Since both $\mtx{\hat{X}}$ and $\vct
{x}\vct{x}^T$ are feasible and $\mtx{\hat{X}}$ is the minimizer, we have
\[
\|\mtx{\hat{X}}\|_1+\lambda \tr(\mtx{\hat{X}})\leq\|\vct{x}\vct{x}^T\|_1+\lambda \tr(\vct{x}\vct{x}^T).
\]
Also, since
\begin{align*}
\|\mtx{\hat{X}}\|_1+\lambda \tr(\mtx{\hat{X}})     & =      \|\vct{x}\vct{x}^T+\mtx{H}\|_1+\lambda \tr(\vct{x}\vct{x}^T+
\mtx{H})\\
                                                                                                  &\geq \|\vct{x}\vct{x}^T\|_1+\langle \sgn(\vct{x})\sgn(\vct{x})^T, \mtx
{H}\rangle+\|\mtx{H}_{\Omega^\perp}\|_1+\lambda\tr(\vct{x}\vct{x}^T)+\lambda\tr(\mtx{H}),
\end{align*}
we have
\begin{equation*}
\langle \sgn(\vct{x})\sgn(\vct{x})^T, \mtx{H}\rangle+\|\mtx{H}_{\Omega^\perp}\|_1+\lambda\tr(\mtx{H})\leq 0.
\end{equation*}
This implies
\begin{equation*}
\langle \PT(\sgn(\vct{x})\sgn(\vct{x})^T)+\lambda\vct{x}\vct{x}^T, \mtx{H}_T\rangle+\langle \PTp(\sgn(\vct{x})\sgn(\vct
{x})^T), \mtx{H}_{T^\perp}\rangle+\|\mtx{H}_{\Omega^\perp}\|_1+\lambda\tr(\mtx{H}_{T^\perp})\leq 0.
\end{equation*}
It is easy to see that $\PTp(\sgn(\vct{x})\sgn(\vct{x})^T)$ is positive semidefinite and combining with (\ref{ineq: positive}), we 
get 
\[
\langle \PTp(\sgn(\vct{x})\sgn(\vct{x})^T), \mtx{H}_{T^\perp}\rangle\geq 0,
\]
which implies
\begin{equation*}
\langle \mtx{X_0}, \mtx{H}_{T\cap \Omega}\rangle+\|\mtx{H}_{\Omega^\perp}\|_1+\lambda\tr(\mtx{H}_{T^\perp})\leq 
0.
\end{equation*}
Notice that $\tr(\mtx{H}_{T^\perp})=\tr(\mtx{H}_{T^\perp\cap \Omega})+\tr(\mtx{H}_B)$. By \eqref{ineq: positive} and $
\lambda\geq 0$, we have
\begin{equation}
\label{ineq: optimality}
\langle \mtx{X_0}, \mtx{H}_{T\cap \Omega}\rangle+\|\mtx{H}_{\Omega^\perp}\|_1+\lambda\tr(\mtx{H}_{T^\perp\cap 
\Omega})\leq 0.
\end{equation}
By the construction of the approximate dual certificate $\mtx{Y}$, we know $\mtx{Y}=\cA^*(\vct{v})$, which implies $\langle 
\mtx{H}, \mtx{Y}\rangle=\langle \cA(\mtx{H}), \vct{v}\rangle=0$. Then we have
\[
\langle \mtx{H}_{T\cap \Omega}, \mtx{Y}_{T\cap \Omega}-\mtx{X_0} \rangle+\langle \mtx{H}_{T\cap \Omega}, \mtx{X}
_0\rangle+\langle \mtx{H}_{T^\perp\cap \Omega}, \mtx{Y}_{T^\perp\cap \Omega}\rangle+\langle \mtx{H}
_
{\Omega^\perp}, \mtx{Y}_{\Omega^\perp}\rangle=0.
\]
By the assumed properties of $\mtx{Y}$, we have
\[
{{\|\mtx{X_0}\|_F}\over{6n^2}}\|\mtx{H}_{T\cap \Omega}\|_F+\langle \mtx{H}_{T\cap \Omega}, \mtx{X}_0\rangle+{{\|
\mtx{X_0}\|_F}\over 5}\tr(\mtx{H}_{T^\perp\cap \Omega})+{{C\sqrt{\log n}}\over{\sqrt{m}}}\|\mtx{X_0}\|_F
\|
\mtx{H}_{\Omega^\perp}\|_1\geq 0.
\]
By \eqref{ineq: optimality}, we have
\begin{equation}
\label{prelowerbound}
{{\|\mtx{X_0}\|_F}\over{6n^2}}\|\mtx{H}_{T\cap \Omega}\|_F\geq (\lambda-{{\|\mtx{X_0}\|_F}\over 5})\tr(\mtx{H}_{T^
\perp\cap \Omega})+(1-{{C\sqrt{\log n}}\over{\sqrt{m}}}\|\mtx{X_0}\|_F)\|\mtx{H}_{\Omega^\perp}\|_1.
\end{equation}
Since 
\[
\PT(\sgn(\vct{x})\sgn(\vct{x})^T)=\|\vct{x}\|_1(\vct{x}\sgn(\vct{x})^T+\sgn(x)\vct{x}^T)-\|\vct{x}\|_1^2\vct{x}\vct{x}^T,
\]
we have
\[
\|\mtx{X_0}\|_F=\|\lambda \vct{x}\vct{x}^T+\PT(\sgn(\vct{x})\sgn(\vct{x})^T)\|_F\leq \lambda+\|\vct{x}\|_1^2+2\sqrt{k}
\|\vct{x}\|_1.
\]
Then together with the assumptions of $\lambda >\sqrt{k}\|\vct{x}\|_1+1$, $\lambda < {{n^2}\over{4}}$ and $m>64C^2\lambda^2\log n$, we have
\[
{{\|\mtx{X_0}\|_F}\over{6n^2}}\leq 3(\lambda-{{\|\mtx{X_0}\|_F}\over 5}) \text{~and~} {{\|\mtx{X_0}\|_F}\over{6n^2}}\leq 
3(1-{{C\sqrt{\log n}}\over{\sqrt{m}}}\|\mtx{X_0}\|_F),
\]
by direct calculation. Therefore, by \eqref{prelowerbound}
\begin{equation}
\label{ineq: lowerbound}
\|\mtx{H}_{T\cap \Omega}\|_F\geq 3\left( \tr(\mtx{H}_{T^\perp\cap \Omega}) + \|\mtx{H}_{\Omega^\perp}\|_1 
\right).
\end{equation}
 Equations \eqref{ineq: upperbound} and \eqref{ineq: lowerbound} give  $\mtx{H}_{T\cap \Omega}=0$, and then by \eqref
{ineq: lowerbound}, we have $\mtx{H}_{T^\perp\cap \Omega}=0$ and $\mtx{H}_{\Omega^\perp}=0$. 
Hence $
\mtx{H}=0$, which implies $\vct{x}\vct{x}^T$ is the unique minimizer of the convex program (\ref{eq:traceL1min}).
\end{proof}

%section 2.3
\subsection{Key lemma}
The following lemma will be essential for the construction of a desirable dual certificate:
\begin{lemma}
\label{key lemma}
For any fixed $\mtx{X} \in T\cap \Omega$, we have $\rank(\mtx{X})\leq 2$. Consider an eigenvalue decomposition $\mtx{X}=
\lambda_1\vct{u_1}\vct{u_1}^T+\lambda_2\vct{u_2}\vct{u_2}^T$, where $\|\vct{u_1}\|=\|\vct{u_2}\|=1$, $
\vct
{u_1}^T\vct{u_2}=0$ and both $\vct{u_1}$ and $\vct{u_2}$ are supported on $G$. Define 
\begin{align*}
\mtx{Y}&=f(\lambda_1, \lambda_2, \vct{u_1}, \vct{u_2})\\
                & :={1\over m(\beta_4-\beta_2)}\sum_{j=1}^m (\lambda_1(|{\vct{z_j}}_G^T\vct{u_1}|^21_{\{|{\vct{z_j}}_G^T\vct
{u_1}|\leq 3\}}-\beta_2)+\lambda_2(|{\vct{z_j}}_G^T\vct{u_2}|^21_{\{|{\vct{z_j}}_G^T\vct{u_2}|\leq 3\}}-
\beta_2))
\vct{z_j}\vct{z_j}^T.
\end{align*}
Here we define $\beta_2=\E z^21_{\{|z|\leq 3\}}\approx 0.9707$, $\beta_4=\E z^41_{\{|z|\leq 3\}}\approx 2.6728$, where 
assuming $z$ a standard normal variable. Then with probability at least $1-4e^{-\gamma m}-1/{n^3}$, 
\[
\|\mtx{Y}_{T\cap \Omega}-\mtx{X}\|_F\leq {1\over 5}\|\mtx{X}\|_F,~~\|\mtx{Y}_{T^\perp\cap \Omega}\|\leq {1\over 
10}\|\mtx{X}\|_F\text{~and~} \|\mtx{Y}_{\Omega^{\perp}}\|_\infty\leq {{C_0\sqrt{\log n}}\over{\sqrt{m}}}
\|
\mtx{X}\|_F.
\]
provided $m\geq C_1k$. Here $\gamma$, $C_0$ and $C_1$ are numerical constants. 
\end{lemma}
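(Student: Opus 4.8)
The plan is to show that $\mtx{Y}$ is an unbiased estimator of $\mtx{X}$, i.e. $\E\,\mtx{Y}=\mtx{X}$, and then to derive the three displayed bounds from two concentration statements: a spectral‑norm bound on $\mtx{Y}_{\Omega}-\mtx{X}$ obtained via an $\varepsilon$-net argument, which simultaneously controls $\|\mtx{Y}_{T\cap\Omega}-\mtx{X}\|_F$ and $\|\mtx{Y}_{T^{\perp}\cap\Omega}\|$, and an entrywise Bernstein bound on $\mtx{Y}_{\Op}$. Throughout, write the $j$-th scalar weight as $c_j=\lambda_1(|\iprod{{\vct{z_j}}_G}{\vct{u_1}}|^2 1_{\{|\iprod{{\vct{z_j}}_G}{\vct{u_1}}|\le 3\}}-\beta_2)+\lambda_2(|\iprod{{\vct{z_j}}_G}{\vct{u_2}}|^2 1_{\{|\iprod{{\vct{z_j}}_G}{\vct{u_2}}|\le 3\}}-\beta_2)$, so $\mtx{Y}=\frac{1}{m(\beta_4-\beta_2)}\sum_j c_j\vct{z_j}\vct{z_j}^T$, and note the deterministic bound $|c_j|\le 9(|\lambda_1|+|\lambda_2|)\le 9\sqrt2\,\|\mtx{X}\|_F$ (a consequence of the truncation at $3$), which will make all tail estimates routine.

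\emph{Step 1 (expectation).} Split $\vct{z}=\vct{z}_G+\vct{z}_B$ with the two blocks independent and $\vct{z}_B$ mean zero. Since $\vct{u_1},\vct{u_2}$ are supported on $G$, the weight $c$ depends only on $\vct{z}_G$, and because $\iprod{\vct{z}}{\vct{u_i}}\sim\mathcal N(0,1)$ we get $\E\,c=0$. Expanding $\vct{z}\vct{z}^T$ into its $G\times G$, $G\times B$, $B\times G$, $B\times B$ blocks, the two off‑diagonal blocks vanish in expectation ($\vct{z}_B$ is mean zero and independent of $c$) and the $B\times B$ block contributes $\E[c]\,\Id=0$. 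For the $G\times G$ block I would diagonalize in an orthonormal basis of $\mathbb R^{G}$ whose first vector is $\vct{u_1}$: writing $\vct{z}_G=\sum_p g_p\vct{e'_p}$ with $g_p$ i.i.d.\ standard normal and $g_1=\iprod{\vct{z}_G}{\vct{u_1}}$, the only surviving entry of $\E[(g_1^2 1_{\{|g_1|\le 3\}}-\beta_2)\vct{z}_G\vct{z}_G^T]$ is the $(1,1)$ entry, equal to $\E[g_1^4 1_{\{|g_1|\le 3\}}]-\beta_2=\beta_4-\beta_2$, all others vanishing by independence and oddness. Hence $\E[c\,\vct{z}_G\vct{z}_G^T]=(\beta_4-\beta_2)(\lambda_1\vct{u_1}\vct{u_1}^T+\lambda_2\vct{u_2}\vct{u_2}^T)=(\beta_4-\beta_2)\mtx{X}$, so the normalization is chosen exactly so that $\E\,\mtx{Y}=\mtx{X}$; this is also why the truncation is recentered by $\beta_2$.

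\emph{Step 2 (spectral norm; first two bounds).} One has $\mtx{Y}_{\Omega}=\frac{1}{m(\beta_4-\beta_2)}\sum_j c_j\,{\vct{z_j}}_G{\vct{z_j}}_G^T$. Since $\mtx{X}\in T\cap\Omega$, a short computation gives $\mtx{Y}_{T\cap\Omega}-\mtx{X}=\PT(\mtx{Y}_{\Omega}-\mtx{X})$ and $\mtx{Y}_{T^{\perp}\cap\Omega}=\PTp(\mtx{Y}_{\Omega}-\mtx{X})$, and from $\PT\mtx{M}=\vct{x}(\mtx{M}\vct{x})^T+(\mtx{M}\vct{x})\vct{x}^T-(\vct{x}^T\mtx{M}\vct{x})\vct{x}\vct{x}^T$ together with $\|\vct{x}\|=1$ we get $\|\PT\mtx{M}\|_F\le 3\|\mtx{M}\|$ and $\|\PTp\mtx{M}\|\le 4\|\mtx{M}\|$. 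So it is enough to prove $\|\mtx{Y}_{\Omega}-\mtx{X}\|\le\frac{1}{50}\|\mtx{X}\|_F$. I would do this with a net: for a fixed unit $\vct{w}$ supported on $G$, the scalar $\vct{w}^T\mtx{Y}_{\Omega}\vct{w}=\frac{1}{m(\beta_4-\beta_2)}\sum_j c_j\iprod{{\vct{z_j}}_G}{\vct{w}}^2$ is an average of i.i.d.\ variables with mean $\vct{w}^T\mtx{X}\vct{w}$, each dominated pointwise by a constant multiple of $\|\mtx{X}\|_F$ times a $\chi^2_1$ variable, hence sub‑exponential with $\psi_1$-norm $O(\|\mtx{X}\|_F)$; Bernstein's inequality (Proposition~5.16 of \cite{VershyninRMT}) gives $\Prob[|\vct{w}^T(\mtx{Y}_{\Omega}-\mtx{X})\vct{w}|>t\|\mtx{X}\|_F]\le 2e^{-cmt^2}$ for small fixed $t$. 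Taking a $\frac14$-net $\mathcal N$ of the unit sphere of $\mathbb R^{G}$ (of cardinality $\le 9^{k}$), a union bound together with the elementary estimate $\|\mtx{M}\|\le 2\max_{\vct{w}\in\mathcal N}|\vct{w}^T\mtx{M}\vct{w}|$ for symmetric $\mtx{M}$ yields, after choosing $t$ a small numerical constant, $\|\mtx{Y}_{\Omega}-\mtx{X}\|\le\frac{1}{50}\|\mtx{X}\|_F$ on an event of probability $\ge 1-2e^{-\gamma m}$, provided $m\ge C_1 k$ with $C_1$ large enough to swallow the entropy factor $9^{k}$ into $e^{-\gamma m}$. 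This gives $\|\mtx{Y}_{T\cap\Omega}-\mtx{X}\|_F\le\frac{3}{50}\|\mtx{X}\|_F\le\frac15\|\mtx{X}\|_F$ and $\|\mtx{Y}_{T^{\perp}\cap\Omega}\|\le\frac{4}{50}\|\mtx{X}\|_F\le\frac1{10}\|\mtx{X}\|_F$.

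\emph{Step 3 (entrywise bound) and the main difficulty.} Every entry of $\mtx{Y}$ outside the $G\times G$ block is $\mtx{Y}_{ab}=\frac{1}{m(\beta_4-\beta_2)}\sum_j c_j z_{ja}z_{jb}$, and since $c_j$ depends only on ${\vct{z_j}}_G$ one checks $\E\,\mtx{Y}_{ab}=0$ in each case (one of $z_{ja},z_{jb}$ is an independent mean‑zero Gaussian, or else $a=b\in B$ and $\E\,c_j=0$). The summands $c_j z_{ja}z_{jb}$ are i.i.d., mean zero, and (being a bounded factor times a product or square of Gaussians) sub‑exponential with $\psi_1$-norm $O(\|\mtx{X}\|_F)$, so Bernstein's inequality gives $\Prob[|\mtx{Y}_{ab}|>C_0\sqrt{\log n/m}\,\|\mtx{X}\|_F]\le 2n^{-7}$ once $m\ge C_0^2\log n$ with $C_0$ large, and a union bound over the at most $2n^2$ such entries bounds $\|\mtx{Y}_{\Op}\|_\infty$ by $C_0\sqrt{\log n/m}\,\|\mtx{X}\|_F$ off an event of probability at most $1/n^3$. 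Intersecting the events of Steps~2 and~3 (both hypotheses $m\ge C_1k$ and $m\ge C_0^2\log n$ hold in the regime where this lemma is applied within Theorem~\ref{thm1}) proves all three bounds with probability at least $1-4e^{-\gamma m}-1/n^3$. The one genuinely delicate point is Step~2: we must get the spectral bound with only $m\gtrsim k$ (no extra logarithmic factor), which rules out a naive matrix‑Bernstein estimate and forces the $\varepsilon$-net argument; what makes the net work despite the weights $c_j$ being data‑dependent is precisely that the truncation keeps $|c_j|$ uniformly $O(\|\mtx{X}\|_F)$ while $\vct{u_1},\vct{u_2}$ remain fixed, so for each net vector the relevant quantity is an honest i.i.d.\ sum of sub‑exponentials.
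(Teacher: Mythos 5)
Your proof is correct, and its overall skeleton matches the paper's: reduce the first two bounds to a spectral-norm bound on $\mtx{Y}_{\Omega}-\mtx{X}$ (a $k\times k$ object, so that $m\gtrsim k$ samples suffice), and get the $\Omega^{\perp}$ bound by entrywise Bernstein plus a union bound over at most $n^2$ entries --- that last step is essentially verbatim the paper's. Where you genuinely diverge is the spectral-norm step. The paper proves a separate supporting lemma: it treats the two eigendirections separately, and for each fixed $\vct{u_a}$ it rescales $\vct{w_j}=\mtx{D}\,|z_{j1}1_{\{|z_{j1}|\le 3\}}|\,\vct{z_j}$ with $\mtx{D}=\diag(\beta_4^{-1/2},\beta_2^{-1/2},\dots)$ so that the $\vct{w_j}$ become i.i.d.\ mean-zero isotropic sub-Gaussian vectors, then quotes the off-the-shelf covariance estimate (Vershynin, Thm.~5.39) and sandwiches by $\mtx{D}^{-1}$; the $\beta_2$-recentering term is handled by a separate Wishart bound (Cor.~5.35), and the two deviations are recombined as $\mtx{Y}_{\Omega}-\mtx{X}=\lambda_1\mtx{W_1}+\lambda_2\mtx{W_2}$. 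You instead keep the full data-dependent weight $c_j$ in one piece, exploit the uniform truncation bound $|c_j|\le 9(|\lambda_1|+|\lambda_2|)\lesssim\|\mtx{X}\|_F$ to make each $c_j\langle{\vct{z_j}}_G,\vct{w}\rangle^2$ sub-exponential with $\psi_1$-norm $O(\|\mtx{X}\|_F)$, and run a self-contained $1/4$-net plus scalar Bernstein argument over the $k$-sphere, absorbing the $9^k$ entropy into $e^{-\gamma m}$ via $m\ge C_1k$. This is morally the same mechanism (Vershynin's theorem is itself proved by such a net), but your version is more elementary and avoids having to verify isotropy/sub-Gaussianity of the rescaled vectors, at the cost of redoing the net bookkeeping by hand; it also requires the explicit unbiasedness computation $\E\,\mtx{Y}=\mtx{X}$ of your Step~1, which in the paper is folded into the statement of the supporting lemma. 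Your reduction $\mtx{Y}_{T\cap\Omega}-\mtx{X}=\PT(\mtx{Y}_{\Omega}-\mtx{X})$, $\mtx{Y}_{T^\perp\cap\Omega}=\PTp(\mtx{Y}_{\Omega}-\mtx{X})$ is legitimate because $\vct{x}$ is supported on $G$, so $\PT$ and $\mathcal{P}_{\Omega}$ commute; the paper uses the same commutation implicitly with slightly different constants. Finally, the implicit need for $m\gtrsim\log n$ in the entrywise step, which you flag honestly, is not a defect of your argument relative to the paper: the paper's own per-entry Bernstein bound carries the same implicit requirement, and the lemma is invoked in Theorem~\ref{thm1} only in a regime where $m>C_0\lambda^2\log n$.
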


Before proving Lemma \ref{key lemma}, we need to prove the following supporting lemma:
\begin{lemma}
\label{supporting}
Suppose $\vct{z_j} \in \mathbb{R}^n$, $j=1,...,m$ are IID $\mathcal{N}(0, \mtx{I}_{n \times n})$ random vectors, and $\vct{u}$ 
is any fixed vector with unit $2$-norm, i.e, $\|\vct{u}\|_2=1$. Then for any fixed $\epsilon>0$, there exists a 
constant 
$\gamma(\epsilon)$ and $C_0(\epsilon)$ satisfying 
\[
\left\|{1\over m}\sum_{j=1}^m (|{\vct{z_j}}^T\vct{u}|^21_{\{|{\vct{z_j}}^T\vct{u}|\leq 3\}}){\vct{z_j}}{\vct{z_j}}^T-((\beta_4-
\beta_2)\vct{u}\vct{u}^T+\beta_2\mtx{I})\right\|\leq \epsilon
\]
with probability at least $1-2e^{-\gamma m}$ provided $m\geq C_0n$.
\end{lemma}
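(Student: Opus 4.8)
The plan is to recognize the fixed matrix $\Sigma:=(\beta_4-\beta_2)\vct{u}\vct{u}^T+\beta_2\mtx{I}$ as the common expectation of the i.i.d.\ summands $\mtx{M}_j:=\bigl(|\langle\vct{z_j},\vct{u}\rangle|^2\,1_{\{|\langle\vct{z_j},\vct{u}\rangle|\le 3\}}\bigr)\vct{z_j}\vct{z_j}^T$, and then to prove concentration of $\frac1m\sum_j\mtx{M}_j$ around $\Sigma$ in operator norm by a standard $\epsilon$-net argument combined with a scalar Bernstein inequality. By rotation invariance of the standard Gaussian I may assume $\vct{u}=\vct{e_1}$, so $\langle\vct{z_j},\vct{u}\rangle=z_{j1}$ and $\mtx{M}_j=z_{j1}^2\,1_{\{|z_{j1}|\le 3\}}\vct{z_j}\vct{z_j}^T$. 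Writing $\E\mtx{M}_j$ entrywise and using independence of the coordinates of $\vct{z_j}$ together with the vanishing of odd Gaussian moments: the $(1,1)$ entry equals $\E[z_{j1}^4\,1_{\{|z_{j1}|\le 3\}}]=\beta_4$; a diagonal entry $(a,a)$ with $a\ge 2$ equals $\E[z_{j1}^2\,1_{\{|z_{j1}|\le 3\}}]\,\E[z_{ja}^2]=\beta_2$; every off-diagonal entry vanishes. Hence $\E\mtx{M}_j=\beta_4\vct{e_1}\vct{e_1}^T+\beta_2(\mtx{I}-\vct{e_1}\vct{e_1}^T)=\Sigma$. (The truncation threshold $3$ plays no role here; it is precisely this computation that fixes the constants $\beta_2,\beta_4$ in the definition of $\mtx{Y}$ in Lemma \ref{key lemma}.)

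For the concentration, it suffices to treat $\epsilon$ below some absolute constant $\epsilon_0$, taking $\gamma(\epsilon)=\gamma(\epsilon_0)$ and $C_0(\epsilon)=C_0(\epsilon_0)$ for $\epsilon\ge\epsilon_0$. For a symmetric matrix $\mtx{A}$ and a $1/4$-net $\mathcal{N}$ of the unit sphere $S^{n-1}$ one has $\|\mtx{A}\|\le 2\max_{\vct{x}\in\mathcal{N}}|\vct{x}^T\mtx{A}\vct{x}|$, and $\mathcal{N}$ can be chosen with $|\mathcal{N}|\le 9^n$. So I reduce to controlling $\vct{x}^T\bigl(\tfrac1m\sum_j\mtx{M}_j-\Sigma\bigr)\vct{x}=\frac1m\sum_j\bigl(w_j\langle\vct{z_j},\vct{x}\rangle^2-\vct{x}^T\Sigma\vct{x}\bigr)$ for each fixed $\vct{x}\in\mathcal{N}$, where $w_j:=z_{j1}^2\,1_{\{|z_{j1}|\le 3\}}\in[0,9]$, and then union bounding. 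Fix $\vct{x}\in S^{n-1}$ and set $X_j:=w_j\langle\vct{z_j},\vct{x}\rangle^2$. Since $w_j\le 9$ and $\langle\vct{z_j},\vct{x}\rangle\sim\mathcal{N}(0,1)$, we have $0\le X_j\le 9\langle\vct{z_j},\vct{x}\rangle^2$, so the $X_j$ are i.i.d.\ sub-exponential with $\psi_1$-norm bounded by an absolute constant, uniformly in $\vct{x}$ and $n$; and by the same splitting $\langle\vct{z_j},\vct{x}\rangle=x_1z_{j1}+Y_j$ with $Y_j\sim\mathcal{N}(0,1-x_1^2)$ independent of $z_{j1}$, one gets $\E X_j=\vct{x}^T\Sigma\vct{x}$. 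Bernstein's inequality for sub-exponential random variables (Proposition 5.16 of \cite{VershyninRMT}) then gives, for all $t$ below an absolute constant, $\P\bigl(|\frac1m\sum_j(X_j-\E X_j)|>t\bigr)\le 2e^{-\gamma_1 mt^2}$ with $\gamma_1$ an absolute constant.

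Taking $t=\epsilon/2$ and union bounding over $\mathcal{N}$,
$\P\bigl(\|\tfrac1m\sum_j\mtx{M}_j-\Sigma\|>\epsilon\bigr)\le 9^n\cdot 2e^{-\gamma_1 m\epsilon^2/4}$.
Choosing $C_0(\epsilon):=8\log 9/(\gamma_1\epsilon^2)$, the hypothesis $m\ge C_0(\epsilon)n$ forces $9^n\le e^{\gamma_1 m\epsilon^2/8}$, so the bound becomes $2e^{-\gamma_1 m\epsilon^2/8}=:2e^{-\gamma(\epsilon)m}$, as required.

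The only step needing genuine care is the pair of Gaussian-moment identities $\E\mtx{M}_j=\Sigma$ and $\E X_j=\vct{x}^T\Sigma\vct{x}$: in the latter one expands $\langle\vct{z_j},\vct{x}\rangle^2=x_1^2z_{j1}^2+2x_1z_{j1}Y_j+Y_j^2$, observes that the cross term dies because $\E Y_j=0$, and uses $\E[z_{j1}^4\,1_{\{|z_{j1}|\le 3\}}]=\beta_4$, $\E[z_{j1}^2\,1_{\{|z_{j1}|\le 3\}}]=\beta_2$, $\E Y_j^2=1-x_1^2$, so that $\E X_j=\beta_2+(\beta_4-\beta_2)x_1^2=\vct{x}^T\Sigma\vct{x}$; this is the place where the precise values of $\beta_2,\beta_4$ (hence the specific truncation at $3$) matter, and everything else is routine. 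A shorter route replaces the net argument by a direct appeal to known operator-norm concentration for sample covariance matrices of i.i.d.\ centered sub-gaussian vectors applied to $|z_{j1}|\,1_{\{|z_{j1}|\le 3\}}\vct{z_j}$ (which are centered, have covariance $\Sigma$, and have $\psi_2$-norm $\le 3\|\langle\vct{z_j},\cdot\rangle\|_{\psi_2}$, an absolute constant), e.g.\ the $\epsilon$-net argument behind Theorem 5.39 of \cite{VershyninRMT}.
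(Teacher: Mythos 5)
Your proof is correct. It differs from the paper's mainly in packaging: the paper whitens the summands, setting $\mtx{D}=\diag(\beta_4^{-1/2},\beta_2^{-1/2},\dots,\beta_2^{-1/2})$ and $\vct{w_j}=\mtx{D}\,|z_{j1}|1_{\{|z_{j1}|\le 3\}}\vct{z_j}$, checks that the $\vct{w_j}$ are i.i.d.\ zero-mean, isotropic, sub-gaussian vectors, invokes covariance estimation for matrices with independent sub-gaussian rows (Theorem 5.39 of \cite{VershyninRMT}) as a black box to get $\|\frac1m\sum_j\vct{w_j}\vct{w_j}^T-\mtx{I}\|\le\epsilon/3$, and then undoes the rescaling at the cost of a factor $\|\mtx{D}^{-1}\|^2=\beta_4\le 3$. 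You instead compute $\E\mtx{M}_j=\Sigma$ directly and rerun the $1/4$-net plus scalar Bernstein argument that underlies that theorem; this is the same concentration mechanism, but it avoids the whitening step (your summands have non-identity covariance $\Sigma$, which the isotropic black-box statement would not accept directly) at the price of a page of standard bookkeeping, and it makes explicit the moment identities $\E[z^4 1_{\{|z|\le3\}}]=\beta_4$, $\E[z^2 1_{\{|z|\le3\}}]=\beta_2$ that in the paper are hidden inside the isotropy claim for $\vct{w_j}$. Your closing remark about applying sample-covariance concentration to $|z_{j1}|1_{\{|z_{j1}|\le3\}}\vct{z_j}$ is essentially the paper's route, modulo the rescaling needed to meet the isotropy hypothesis. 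Both arguments yield the stated conclusion, failure probability $2e^{-\gamma(\epsilon)m}$ once $m\ge C_0(\epsilon)n$, with comparable constants.
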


\begin{proof}
By rotational invariance, we can assume $\vct{u}=\vct{e_1}$. Define a matrix $\mtx{D}=\diag({1\over{\sqrt{\beta_4}}}, 
{1\over{\sqrt{\beta_2}}}, ..., {1\over{\sqrt{\beta_2}}})$.
 Define $\vct{w_j}=\mtx{D}|z_{j1}1_{\{|z_{j1}|\leq 3\}}|\vct{z_j}$. It is immediate to check that the $\vct{w_j}$'s are IID copies 
of a
zero-mean, isotropic and sub-Gaussian random vector $\vct{w}$. Standard results about random matrices with sub-gaussian
rows---e.g.~Theorem 5.39 in \cite{VershyninRMT}---give 
\[
\left\|{1\over m}\sum_{j=1}^m {\vct{w_j}}{\vct{w_j}}^T-\mtx{I}\right\|\leq \epsilon/3,
\]
which implies
\begin{align*}
&\left\|\left({1\over m}\sum_{j=1}^m (|z_{j1}|^21_{\{|{z_{j1}}|\leq 3\}}){\vct{z_j}}{\vct{z_j}}^T-((\beta_4-\beta_2)\vct
{e_1}\vct{e_1}^T+\beta_2\mtx{I})\right)\right\|\\
&=\left\|\mtx{D}^{-1}\left({1\over m}\sum_{j=1}^m ({\vct{w_j}}{\vct{w_j}}^T-\mtx{I})\right)\mtx{D}^{-1}\right\|\leq \|\mtx
{D}^{-1}\|(\epsilon/3)\|\mtx{D}^{-1}\|\leq \epsilon.
\end{align*}
with probability at least $1 - 2 e^{-\gamma(\epsilon) m}$ provided that $m \ge C_0(\epsilon) n$, where $C_0$ is sufficiently 
large. 
\end{proof}

\begin{proof} of Lemma \ref{key lemma}.
It suffices to prove
\[
\|\mtx{Y}_\Omega-\mtx{X}\|\leq {\sqrt{2}\over{20}}\|\mtx{X}\|_F,~~~~ \|\mtx{Y}_{\Omega^{\perp}}\|_\infty\leq 
{{C_0\sqrt{\log n}}\over{\sqrt{m}}}\|\mtx{X}\|_F.
\]
since
\[
\|\mtx{Y}_{T\cap \Omega}-\mtx{X}\|_F\leq \sqrt{2}\|\mtx{Y}_{T\cap \Omega}-\mtx{X}\|\leq 2\sqrt{2}\|\mtx{Y}_
{\Omega}-\mtx{X}\|\leq {1\over 5}\|\mtx{X}\|_F,
\]
and
\[
\|\mtx{Y}_{T^\perp\cap \Omega}\|\leq {1\over 10}\|\mtx{X}\|_F.
\]
\paragraph{1. $\|\mtx{Y}_\Omega-\mtx{X}\|\leq {\sqrt{2}\over{20}}\|\mtx{X}\|_F$.} By Lemma \ref{supporting}, we have 
\[
\|{1\over m}\sum_{j=1}^m (|{\vct{z_j}}_G^T\vct{u_a}|^21_{\{|{\vct{z_j}}_G^T\vct{u_a}|\leq 3\}}){\vct{z_j}}_G{\vct{z_j}}_G^T-
((\beta_4-\beta_2)\vct{u_a}\vct{u_a}^T+\beta_2\mtx{I})\|\leq \epsilon, ~~a=1,2.
\]
with probability at least $1-2e^{-\gamma m}$ provided $m\geq C_1 n$. Similarly, since ${1\over m}\sum_{j=1}^m {\vct{z_j}}_G
{\vct{z_j}}_G^T$ is Wishart when restricted on $\Omega$, standard results in random matrix theory---
e.g.~Corollary 5.35 in
\cite{VershyninRMT}---assert that
\[
\|{1\over m}\sum_{j=1}^m {\vct{z_j}}_G{\vct{z_j}}_G^T- \mtx{I} \| \leq \epsilon
\]
with probability at least $1-2e^{-\gamma m}$ provided $m\geq C_1 n$. Then Denote 
\[
\mtx{W_a}={1\over m(\beta_4-\beta_2)}\sum_{j=1}^m (|{\vct{z_j}}_G^T\vct{u_a}|^21_{\{|{\vct{z_j}}_G^T\vct{u_a}|\leq 
3\}}-\beta_2){\vct{z_j}}_G{\vct{z_j}}_G^T-\vct{u_a}\vct{u_a}^T, ~~a=1,2.
\]
We have with probability at least $1-4e^{\gamma m}$, $\|\mtx{W_a}\|\leq {1\over{20}}$ provided $m\geq C_1k$. This 
actually gives us the conclusion by noticing that 
\[
\mtx{Y}_\Omega-\mtx{X}=\lambda_1\mtx{W_1}+\lambda_2\mtx{W_2}.
\]

\paragraph{2. $\|\mtx{Y}_{\Omega^{\perp}}\|_\infty\leq {{C_0\sqrt{\log n}}\over{\sqrt{m}}}\|\mtx{X}\|_F$.}~\\
For any fixed $a, b \in [n]$, $a>k$ or $b>k$, we know $Y_{ab}=\vct{e_a}^T\mtx{Y}\vct{e_b}$ is the arithmetic mean of m IID 
centered sub-exponential random variables, whose $\psi_1-$ norm is bounded by $K(|\lambda_1|+|
\lambda_2|)$ 
with a numerical constant $K$. Then by Proposition 5.16 in \cite{VershyninRMT}, we have 
\[
|Y_{ab}|_\infty\leq {{C_0\sqrt{\log n}}\over{\sqrt{m}}}\|\mtx{X}\|_F,
\]
with probability at least $1-1/{n^5}$, which implies our claim.
\end{proof}

%section 2.4
\subsection{Adaptation of the golfing scheme}
In this section we will construct the dual certificate satisfying all the properties in Lemma \ref{exact recovery} by using the golfing 
scheme. \\
%\begin{lemma}
%\label{construction of dual certificate}
%There exists a matrix $\mtx{Y}$ satisfying all the properties required in Lemma \eqref{exact recovery} with probability at least 
%$1-(2\log n+3)(e^{-\gamma m}+1/{n^3})$, provided $m\geq (2\log n+3)C_1k$. Here $C_1$ and $\gamma$ 
%are 
%the same numerical constants as in  Lemma \ref{key lemma}.\\
%\end{lemma}\\

\begin{proof} of Theorem \ref{thm1}: It suffices to construct $\mtx{Y}$ satisfying all the properties in Lemma \ref{exact 
recovery} with high probability. We divide the group of IID random vectors $\{\vct{z_1}, ..., \vct{z_m}\}$ into $l:=
\lfloor 
2\log(n)\rfloor+3$ groups
\[
\{\vct{z^{(1)}_1}, ..., \vct{z^{(1)}_{m_1}}\}, ..., \{\vct{z^{(l)}_1}, ..., \vct{z^{(l)}_{m_l}}\}.
\]
This implies that $m_1+...+m_l=m$. We use the same definition of $\mtx{X_0}$ in Lemma \eqref{exact recovery}. For i=1,..,l, 
as in Lemma \ref{key lemma}, we define the eigenvalue decomposition
\[
\mtx{X_{i-1}}=\lambda_{1_{i-1}}\vct{u_{1_{i-1}}}\vct{u_{1_{i-1}}}^T+\lambda_{2_{i-1}}\vct{u_{2_{i-1}}}\vct{u_{2_{i-1}}}^T.
\]
 and 
 \[
 \mtx{Y_i}=f\left(\lambda_{1_{i-1}}, \lambda_{2_{i-1}}, \vct{u_{1_{i-1}}}, \vct{u_{2_{i-1}}}\right). 
\]
 Moreover, we define $\mtx{X_i}=\mtx{X_{i-1}}-\PTO({\mtx{Y_i}})$,  and $\mtx{Y}=\sum_{i=1}^l \mtx{Y_i}$. By 
definition we have $\mtx{X_i}$'s are in $T\cap \Omega$, so $\mtx{Y_i}$ is well-defined. By Lemma 
\eqref{key lemma}, with probability at least $1-l(4e^{-\gamma m_i}+1/{n^3})$, we have for $i=1,...,l$
\[
\|\mtx{X_i}\|_F\leq{1\over 5}\|\mtx{X_{i-1}}\|_F, ~~~\|{\mtx{Y_i}}_{T^\perp\cap \Omega}\|\leq {1\over 10}\|\mtx{X_i}\|
_F, ~~\text{and}~~\|{\mtx{Y_i}}_{\Omega^{\perp}}\|_\infty\leq {{C_0\sqrt{\log n}}\over{\sqrt{m}}}\|\mtx
{X_i}\|_F,
\]
provided $m_1\geq C_1k, ..., m_l\geq C_1k$. Therefore, $\mtx{Y}=v_1\vct{z_1}\vct{z_1}^T+...+v_m\vct{z_m}\vct{z_m}^T$  
and
\[
\|\mtx{Y}_{T\cap \Omega}-X_0\|_F=\|\mtx{X_l}\|_F\leq ({1 \over 5})^l\|\mtx{X_0}\|_F<{{\|\mtx{X_0}\|_F}\over{6n^2}},~~~
(\text{by $l>2\log n+2$})
\]
\[
\|\mtx{Y}_{T^\perp \cap \Omega}\|\leq \sum_{i=1}^l\|{\mtx{Y_i}}_{T^\perp \cap \Omega}\|\leq \sum_{i=1}^l{{\|\mtx{X_
{i-1}}\|_F}\over {10}}\leq\sum_{i=1}^l{{\|\mtx{X_0}\|_F}\over {10}}({1\over 5})^{(i-1)}\leq {{\|\mtx{X_0}\|_F}
\over 
8},
\]
and
\begin{align*}
\|Y_{\Omega^{\perp}}\|_\infty   &\leq \sum_{i=1}^l\|Y_{i_{\Omega^{\perp}}}\|\leq \sum_{i=1}^l {{C_0\sqrt{\log n}\|X_
{i-1}\|_F}\over{\sqrt{m}}}\leq {5\over 4}{{C_0\sqrt{\log n}}\over{\sqrt{m}}}\|\mtx{X_0}\|_F.
\end{align*}
When $m\geq (2\log n+3)C_1k$, we can always make such a division of $\{\vct{z_1}, ..., \vct{z_m}\}$, so the proof is 
complete.
\end{proof}

%\leq ({3\over 10})^i\|X_0\|_F\leq ({3\over 10})^i(\lambda+k).

%SECTION THREE
\section{The proof of Theorem \ref{thm2}}
We first prove a useful lemma:
\begin{lemma}
\label{lemma3.1}
Suppose $\vct{a_j},~j=1,...,m_1$ and $\vct{b_j}, ~j=1,..., m_2$ are IID $\mathcal{N}(0, \mtx{I_{N\times N}})$ random vectors in $
\mathbb{R}^N$, where $m_1\geq 0$, $m_2\geq 0$ and $m_1+m_2<N$. Then there is an event 
\[
\widetilde{E} =\widetilde{E}(\vct{a_1},..., \vct{a_{m_1}}, \vct{b_1},..., \vct{b_{m_2}})
\]
with probability at least $1-m_2e^{-0.09(N-m_1)}$, such that on $\widetilde{E}$ we have the following property:\\
Any $\alpha_j<0, ~ j=1,..., m_1$,  $\beta_j\geq 0, ~ j=1,..., m_2$, $\lambda \in \mathbb{R}$, $\mtx{S} \preceq \mtx{0}$ and $\mtx{L} \in \mathbb{R}^{n \times n}$ symmetric satisfying
\[
\sum_{j=1}^{m_1}\alpha_j\vct{a_j}\vct{a_j}^T+\sum_{j=1}^{m_2}\beta_j\vct{b_j}\vct{b_j}^T=\mtx{L}+\mtx{S}+\lambda\mtx{I},
\]
must also satisfy 
\[
{{N-m_1}\over 2}(\sum_{j=1}^{m_2}\beta_j)\leq \lambda m_2+\sqrt{m_2}\|\mtx{L}\|_F.
\]
\end{lemma}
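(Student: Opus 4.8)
The strategy is to annihilate the $\vct{a_j}$-contributions by conjugating the displayed identity with the orthogonal projection $\mtx P$ onto $(\Span\{\vct{a_1},\dots,\vct{a_{m_1}}\})^{\perp}$, and then to pair the resulting identity with an auxiliary projection of rank \emph{exactly} $m_2$. Conjugation by $\mtx P$ removes the $\alpha_j$-terms since $\mtx P\vct{a_j}=\vct 0$; pairing with a rank-$m_2$ projection is what produces $\sqrt{m_2}\,\|\mtx L\|_F$ and $\lambda m_2$ on the right-hand side; and a $\chi^2$ lower-tail bound produces the factor $\tfrac{N-m_1}{2}$ on the left.

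Concretely, with $\mtx P$ as above we have $\mtx P\vct{a_j}=\vct 0$ for all $j$ and $d:=\tr(\mtx P)\ge N-m_1$. I would define
\[
\widetilde E=\widetilde E(\vct{a_1},\dots,\vct{a_{m_1}},\vct{b_1},\dots,\vct{b_{m_2}})=\bigl\{\,\|\mtx P\vct{b_j}\|_2^2\ge \tfrac{N-m_1}{2}\ \text{ for all } j=1,\dots,m_2\,\bigr\}.
\]
Conditioning on $\vct{a_1},\dots,\vct{a_{m_1}}$, the vector $\mtx P\vct{b_j}$ is a standard Gaussian inside a $d$-dimensional subspace, so $\|\mtx P\vct{b_j}\|_2^2\sim\chi^2_d$; the elementary Chernoff estimate $\P(\chi^2_d\le d/2)\le e^{-(d/2)(\ln 2-1/2)}\le e^{-0.09 d}\le e^{-0.09(N-m_1)}$, together with a union bound over $j$, gives $\P(\widetilde E^{c})\le m_2\,e^{-0.09(N-m_1)}$.

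Next I would fix any $\alpha_j,\beta_j,\lambda,\mtx S,\mtx L$ as in the statement and work on $\widetilde E$. Since $m_1+m_2<N$ we have $\dim\Span\{\mtx P\vct{b_1},\dots,\mtx P\vct{b_{m_2}}\}\le m_2<N-m_1\le d$, so one can choose a subspace $W$ with $\Span\{\mtx P\vct{b_j}\}_j\subseteq W\subseteq\range(\mtx P)$ and $\dim W=m_2$; let $\mtx\Pi$ be the orthogonal projection onto $W$. Then $\mtx\Pi\mtx P=\mtx P\mtx\Pi=\mtx\Pi$ (both projections are symmetric and $W\subseteq\range(\mtx P)$) and $\mtx\Pi\mtx P\vct{b_j}=\mtx P\vct{b_j}$ for every $j$ (since $\mtx P\vct{b_j}\in W$). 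Conjugating $\sum_j\alpha_j\vct{a_j}\vct{a_j}^T+\sum_j\beta_j\vct{b_j}\vct{b_j}^T=\mtx L+\mtx S+\lambda\mtx I$ by $\mtx P$ kills the first sum, and then taking the Frobenius inner product with $\mtx\Pi$, using $\mtx P\mtx\Pi\mtx P=\mtx\Pi$ and $\tr(\mtx\Pi)=m_2$, yields
\[
\sum_{j=1}^{m_2}\beta_j\,\|\mtx P\vct{b_j}\|_2^2=\langle\mtx L,\mtx\Pi\rangle+\langle\mtx S,\mtx\Pi\rangle+\lambda m_2 .
\]

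Finally I would estimate the three pieces. On the left, $\beta_j\ge 0$ and the definition of $\widetilde E$ give $\sum_j\beta_j\|\mtx P\vct{b_j}\|_2^2\ge\tfrac{N-m_1}{2}\sum_j\beta_j$. On the right, $\mtx S\preceq\mtx 0$ gives $\langle\mtx S,\mtx\Pi\rangle=\tr(\mtx\Pi\mtx S\mtx\Pi)\le 0$, while Cauchy--Schwarz in the Frobenius inner product gives $\langle\mtx L,\mtx\Pi\rangle\le\|\mtx L\|_F\,\|\mtx\Pi\|_F=\sqrt{m_2}\,\|\mtx L\|_F$. Combining these yields $\tfrac{N-m_1}{2}\sum_j\beta_j\le\lambda m_2+\sqrt{m_2}\,\|\mtx L\|_F$, as claimed. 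The two points that need care are pinning down the numerical constant in the $\chi^2$ lower-tail bound so that it comes out as $0.09$, and the observation that one must pair with a projection of rank \emph{exactly} $m_2$ rather than with the column space of $\sum_j\beta_j\vct b_j\vct b_j^T$, whose rank may be strictly smaller; the latter is precisely where the hypothesis $m_1+m_2<N$ is used, and it is what makes the coefficient of $\lambda$ equal $m_2$ with no sign restriction on $\lambda$.
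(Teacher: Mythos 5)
Your proof is correct and follows essentially the same route as the paper: project out the $\vct{a_j}$-directions, pair the identity with a rank-$m_2$ projector containing the projected $\vct{b_j}$'s, and use the $\chi^2(N-m_1)$ lower tail (constant $0.09$), Cauchy--Schwarz for $\mtx{L}$, and $\mtx{S}\preceq\mtx{0}$. The only cosmetic difference is that you phrase it basis-free and pad $W$ up to dimension exactly $m_2$, whereas the paper builds an explicit orthonormal basis and uses the almost-sure linear independence of the $\vct{a_j},\vct{b_j}$ to the same effect.
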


\begin{proof}
With probability $1$ we have $\vct{a_1}, ..., \vct{a_{m_1}}, \vct{b_1}, ..., \vct{b_{m_2}}$ are linearly independent. Suppose 
$\{\vct{v_1}, ..., \vct{v_{m_1}}, \vct{v_{m_1+1}}, ..., \vct{v_{m_1+m_2}}, ..., \vct{v_N}\}$ is an orthonormal basis of 
$\mathbb{R}^N$ satisfying
\[
\Span{(\vct{a_1}, ..., \vct{a_{m_1}})}=\Span{(\vct{v_1}, ..., \vct{v_{m_1}})},
\]
and
\[
\Span{(\vct{a_1}, ..., \vct{a_{m_1}},~\vct{b_1}, ..., \vct{b_{m_2}})}=\Span(\vct{v_1}, ..., \vct{v_{m_1+m_2}}).
\]
Then we can further assume $(\vct{v_1}, ..., \vct{v_{m_1}})$ only depend on $(\vct{a_1}, ..., \vct{a_{m_1}})$ and 
are independent of $(\vct{b_1}, ..., \vct{b_{m_2}})$.
Then we have
\begin{align*}
\left\langle\sum_{j=m_1+1}^{m_1+m_2} \vct{v_j}\vct{v_j}^T, \mtx{L}+\mtx{S}+\lambda\mtx{I}\right\rangle
&=\left\langle\sum_{j=m_1+1}^{m_1+m_2} \vct{v_j}\vct{v_j}^T, \sum_{j=1}^{m_1}\alpha_j\vct{a_j}\vct{a_j}^T+\sum_{j=1}^
{m_2}\beta_j\vct{b_j}\vct{b_j}^T\right\rangle=\left\langle\sum_{j=m_1+1}^{m_1+m_2} \vct{v_j}\vct{v_j}^T, \sum_{j=1}^
{m_2}\beta_j\vct{b_j}\vct{b_j}^T\right\rangle\\
&=\left\langle\sum_{j=m_1+1}^N \vct{v_j}\vct{v_j}^T, \sum_{j=1}^{m_2}\beta_j\vct{b_j}\vct{b_j}^T\right\rangle=\left\langle\mtx{I}-\sum_
{j=1}^{m_1} \vct{v_j}\vct{v_j}^T, \sum_{j=1}^{m_2}\beta_j\vct{b_j}\vct{b_j}^T\right\rangle\\
&=\sum_{j=1}^{m_2}\beta_j\left(\|\vct{b_j}\|^2-\sum_{k=1}^{m_1}|\vct{v}_k^T\vct{b}_j|^2\right).
\end{align*}
Since $\vct{b_j}$ are IID $\mathcal{N}(\vct{0}, \mtx{I})$ random vectors, and are independent from the orthonormal vectors $\vct{v_1}, ..., \vct{v_{m_1}}$, we have
\[
\|\vct{b_j}\|^2-\sum_{k=1}^{m_1}|\vct{v}_k^T\vct{b}_j|^2 \sim \chi^2(N-m_1).
\]
By the Chernoff upper bound for the $\chi^2$ distribution, we have
\[
\P\left(\|\vct{b_j}\|^2-\sum_{k=1}^{m_1}|\vct{v}_k^T\vct{b}_j|^2 \geq {{N-m_1}\over 2}\right)\leq ({1\over 2}e^{1/2})^{(N-m_1)/2}\leq e^{-0.09(N-m_1)}.
\]
Then we have
\[
\left\langle\sum_{j=m_1+1}^{m_1+m_2} \vct{v_j}\vct{v_j}^T, \mtx{L}+\mtx{S}+\lambda\mtx{I}\right\rangle\geq \sum_{j=1}^{m_2} \beta_j({{N-m_1}\over 2})
\]
with probability $1-m_2e^{-0.09(N-m_1)}$.\\
\\
On the other hand, we have
\begin{align*}
\left\langle\sum_{j=m_1+1}^{m_1+m_2} \vct{v_j}\vct{v_j}^T, \mtx{L}+\mtx{S}+\lambda\mtx{I}\right\rangle  \leq \left\langle\sum_
{j=m_1+1}^{m_1+m_2} \vct{v_j}\vct{v_j}^T, \mtx{L}+\lambda\mtx{I}\right\rangle  \leq \lambda m_2+\|\mtx{L}\|_F \sqrt{m_2},
\end{align*}
which implies our claim.
\end{proof}

\paragraph{The proof of Theorem \ref{thm2}:}~~\\
We start by defining the event $E=E(\vct{z_1}, ..., \vct{z_m})$. First, we define an event
 \[
 E_0=\{\left\vert\left<\vct{x}, {\vct{z_j}}_G\right>\right\vert^2\leq 10\log n, j=1,..., m\}.
 \]
 By the assumption that $\|\vct{x}\|_2=1$ and $\vct{z_{j_G}} \sim \mathcal{N}(\vct{0}, \mtx{I}_{k \times k})$, we have
\[
\left\vert\left<\vct{x}, {\vct{z_j}}_G\right>\right\vert^2 \sim \chi^2(1),
\]
which implies that $\P(E_0)\geq 1-{m\over{n^5}}$. \\
Next, for any partition of $\{1,...,m\}=\{j_1,...,j_{m_1}\}\cup\{k_1,...,k_{m_2}\}$, where $j_1<...<j_{m_1}$, $k_1<...<k_{m_2}$, $m_1\geq 0$, $m_2\geq 0$ and $m_1+m_2=m$, define
\[
E_{\{j_1,...,j_{m_1}\}\cup\{k_1,...,k_{m_2}\}}=\widetilde{E}({\vct{z_{j_1}}}_B, ..., {\vct{z_{j_{m_1}}}}_B, {\vct{z_{k_1}}}_B, ..., {\vct{z_{k_{m_2}}}}_B).
\]
Then by Lemma \ref{lemma3.1} we have 
\[
\P(E_{\{j_1,...,j_{m_1}\}\cup\{k_1,...,k_{m_2}\}})\geq 1-m_2e^{-0.09(n-k-m_1)}\geq 1-me^{-0.09(n-k-m)}.
\]
Now we define the event $E$ by
\[
E=E_0\cap\left(\bigcap_{\text{all partitions of $[m]$}} E_{\{j_1,...,j_{m_1}\}\cup\{k_1,...,k_{m_2}\}}\right).
\]
Then
\[
\P(E)\geq 1-{m\over{n^5}}-2^m m e^{-0.09(n-k-m)}\geq 1-{m\over{n^5}}-m e^{-0.09n+0.09k+0.79m}.
\]
Hereafter all our discussions will be on the event $E$.\\
\\
We now come back to derive the necessary condition for $\vct{x}\vct{x}^T$ to be an optimal point 
of \eqref{eq:traceL1min}. By section 5.9.2 of \cite{Boyd}, the condition is
\[
\mtx{0} \in \partial(\|\mtx{X}\|_1 + \lambda \tr(\mtx{X}))|_{\vct{x}\vct{x}^T} + \mtx{S} + \cA^*(\vct{v}), \quad \mtx{S} \preceq \mtx{0}, 
\quad \left< \mtx{S},  \vct{x}\vct{x}^T\right> = 0 
\]
which, using the definition of the subgradient, is equivalent to
\[
\mtx{0} \in \sgn(\vct{x}\vct{x}^T) + \mtx{L}_{\Omega^\perp} + \lambda \mtx{I} + \mtx{S} + \cA^*(\vct{v}), \quad \mtx{S} \preceq \mtx
{0}, \quad \left< \mtx{S}, \vct{x}\vct{x}^T\right> = 0, \quad \|\mtx{L}_{\Omega^\perp}\|_\infty \leq 1 \\
\]
One can verify that $\mtx{S} \preceq \mtx{0}$ and  $\left< \mtx{S}, \vct{x}\vct{x}^T\right> = 0 $ is equivalent to $\mtx{S} \preceq \mtx{0}$ 
and $ \PT(\mtx{S}) = 0$. 
Thus the necessary condition for $\vct{x}\vct{x}^T$ to be a minimizer of this program is the existence of a dual certificate $\mtx{Y} $ with the following properties:
\begin{align}
\label{eqconstraint}
&\mtx{Y} = \sum_{j=1}^{m}c_j \vct{z_j }\vct{z_j}^T = \sgn(\vct{x})\sgn(\vct{x})^T + \mtx{L}_{\Op} + \lambda \mtx{I} + \mtx{S}_
{\Tp},\\
\label{ineqconstraint1}
&\|\mtx{L}_{\Op}\|_{\infty} \leq 1,\\
\label{ineqconstraint2}
&\mtx{S}_{\Tp} \preceq \mtx{0}.
\end{align}
Project both sides of \eqref{eqconstraint} on $\Gamma$, we have
\begin{equation}
\label{keyequation1}
\mtx{Y}_\Gamma=\sum_{j=1}^{m}c_j {\vct{z_j }}_B{\vct{z_j}}_B^T = \mtx{L}_{\Gamma} + \lambda \mtx{I}_\Gamma + \mtx{S}_\Gamma.
\end{equation}
Since $\Gamma \in T^\perp$, we have 
\begin{equation}
\label{S_Gamma}
\mtx{S}_\Gamma \preceq \mtx{0}.
\end{equation}
 It is also obvious that $\|\mtx{L}_{\Gamma} \|_\infty\leq \|\mtx{L}_{\Omega^\perp} \|_\infty\leq 1$, which implies
 \begin{equation}
 \label{L_Gamma}
 \|\mtx{L_\Gamma}\|_F\leq (n-k)\|\mtx{L_\Gamma}\|_\infty\leq n-k, \text{~and~} \tr(\mtx{L_\Gamma})\leq n-k.
 \end{equation}
 On the other hand, project both sides of \eqref{eqconstraint} on $T$, we have
\begin{align*}
\mtx{Y}_T =\|\vct{x}\|_1 (\sgn(\vct{x})\vct{x}^T + \vct{x}\sgn(\vct{x})^T) - \|\vct{x}\|_1^2 \vct{x}\vct{x}^T+  \mtx{L}_{T
\cap{\Omega^\perp}} +\lambda \vct{x}\vct{x}^T ,
\end{align*}
and
\begin{align*}
\mtx{Y}_{T\cap \Omega} =\|\vct{x}\|_1 (\sgn(\vct{x})\vct{x}^T + \vct{x}\sgn(\vct{x})^T) -\|\vct{x}\|_1^2\vct{x}\vct{x}^T  
+\lambda \vct{x}\vct{x}^T ,
\end{align*}
which implies
\begin{equation}
\label{keyequation2}
\vct{x}^T\mtx{Y}_{T\cap \Omega}\vct{x} = \sum_{j=1}^{m}c_j \left\vert\left<\vct{x}, {\vct{z_j}}_G\right>\right\vert^2  = \|
\vct{x}\|_1^2 +\lambda \|\vct{x}\|_2^2= \|\vct{x}\|_1^2 + \lambda.
\end{equation}

 \paragraph{Case 1: $\lambda<-{k\over2}$.}~~ \\
By the assumption $k\leq m \leq {n \over {40\log n}}$, we can assume the eigenvalue decomposition
\[
\sum_{j=1}^{m}c_j {\vct{z_j }}_B{\vct{z_j}}_B^T =\mu_1\vct{u_1}\vct{u_1}^T+...+\mu_m\vct{u_m}\vct{u_m}^T+0\cdot\vct{u_
{m+1}}\vct{u_{m+1}}^T+...+0\cdot \vct{u_{n-k}}\vct{u_{n-k}}^T,
\]
where $\{\vct{u_1}, ..., \vct{u_{n-k}}\}$ is an orthogonal basis of $\Span(\vct{e_{k+1}}, ..., \vct{e_{n}})$.
Then by \eqref{keyequation1}, we have
\[
\vct{u_j}^T(\mtx{L}_{\Gamma} + \lambda \mtx{I}_\Gamma + \mtx{S}_\Gamma)\vct{u_j}=\vct{u_j}^T\left(\sum_{j=1}^{m}c_j {\vct{z_j }}_B{\vct{z_j}}_B^T \right)\vct{u_j}=0,
\]
for $j=m+1,.., n-k$. By \eqref{S_Gamma}, we have
\begin{equation}
\label{bigterm}
\vct{u_j}^T\mtx{L_{\Gamma}}\vct{u_j}\geq {{\|\vct{x}\|_1^2}\over2}=-\vct{u_j}^T(\lambda \mtx{I}_\Gamma + \mtx{S}_\Gamma)\vct{u_j}\geq -\lambda\geq {k \over 2} 
\end{equation}
Since $\{\vct{u_1}, ..., \vct{u_{n-k}}\}$ is an orthogonal basis of $\Span(\vct{e_{k+1}}, ..., \vct{e_{n}})$, we have
\[
\sum_{j=1}^{n-k}\vct{u_j}^T\mtx{L_{\Gamma}}\vct{u_j}=\left\langle \mtx{L_\Gamma}, \sum_{j=1}^{n-k}\vct{u_j}\vct{u_j}^T\right\rangle=\tr(\mtx{L_\Gamma})\leq n-k.
\]
By \eqref{bigterm} and the assumption $4\leq k\leq m \leq {n \over {40\log n}}$, we have 
\begin{equation}
\label{smallterm}
\sum_{j=1}^{m}\vct{u_j}^T\mtx{L_{\Gamma}}\vct{u_j}\vct{u_m}\leq n-k-(n-k-m){k \over2}<0.
\end{equation}
On the other hand
\begin{equation}
\label{Fcontrol}
\left|\sum_{j=1}^{m}\vct{u_j}^T\mtx{L_{\Gamma}}\vct{u_j}\vct{u_m}\right|=\left|\left\langle \mtx{L_\Gamma}, \sum_{j=1}^{m}\vct{u_j}\vct{u_j}^T\right\rangle\right|\leq \|\mtx{L_\Gamma}\|_F\|\sum_{j=1}^{m}\vct{u_j}\vct{u_j}^T\|_F\leq (n-k)\sqrt{m}.
\end{equation}
By \eqref{smallterm} and \eqref{Fcontrol}, we have $\leq (n-k)\sqrt{m}\geq (n-k-m){k \over2}-(n-k)$ which implies
\[
m\geq ({k\over 4}-1)^2.
\]

\paragraph{Case 2:  $\lambda \geq -{k\over2}$.}~~\\ 
Let $I_+ = \{k \in \{1,2\ldots,m\}; c_k \geq 0\} $ and $I_- = \{k \in 
\{1,2,\ldots,m \}; c_k < 0\} $. By \eqref{keyequation2} and the definition of $E \subset E_0$, we have
\begin{equation}
\label{low}
\|\vct{x}\|_1^2+\lambda\leq 10 \text{log}(n) \sum_{j \in I_+}c_j .
\end{equation}
By \eqref{keyequation2},
\[
\mtx{Y}_\Gamma=\sum_{j \in I_-}c_j {\vct{z_j }}_B{\vct{z_j}}_B^T +\sum_{j \in I_+}c_j {\vct{z_j }}_B{\vct{z_j}}_B^T = \mtx{L}_\Gamma + \lambda \mtx{I}_\Gamma + \mtx{S}_\Gamma.
\]
By the definition of $E$ and Lemma \ref{lemma3.1}, we have
\begin{equation}
\label{up}
{{n-k-|I_-|}\over 2}\sum_{j \in I_+}c_j \leq \lambda |I_+|+\sqrt{|I_+|}\|\mtx{L_\Gamma}\|_F.
\end{equation}
Notice that $\|\mtx{L_\Gamma}\|_F\leq (n-k)\|\mtx{L_\Gamma}\|_\infty\leq n-k$. By \eqref{low} and \eqref{up},
\[
{{(n-k-m)\|\vct{x}\|_1^2}\over{20\log n}}+\lambda\left({{n-k-m}\over{20\log n}}-m\right)\leq \sqrt{m}(n-k).
\]
By the assumption that $k\leq m \leq {n\over{40\log n}}$ and $\lambda \geq -{k\over2}$, we have 
\[
{{(n-k-m)(\|\vct{x}\|_1^2-k/2)}\over{20\log n}}\leq \sqrt{m}(n-k),
\]
which implies
\[
m\geq  {{\max(\|\vct{x}\|_1^2-k/2, 0)^2}\over{500\log^2 n}}.
\]
\\
Therefore, by putting Case 1 and Case 2 together, we have
\[
m\geq  \min\left(({k\over 4}-1)^2, {{\max(\|\vct{x}\|_1^2-k/2, 0)^2}\over{500\log^2 n}}\right).
\]

\section{Discussion}
We provide theoretical guarantees on the recovery of a sparse signal from quadratic Gaussian measurements via convex programming and show that our results are sharp for a class of recently proposed convex relaxations. For this model, unlike classical compressed sensing, compressive phase retrieval imposes a stricter limitation on the number of measurements needed for recovery via naive convex relaxation than is needed for well-posedness. This leads to a natural open question: can we narrow the gap by using other convex programs besides \eqref{eq:traceL1min}?\\

 %To elaborate, unlike the classical theory of compressed sensing, the smallest number of measurements needed for signal recovery via the CPR algorithm $\eqref{eq:traceL1min}$ is $O(k^2 /\log^2(n))$. The setting of compressive phase retrieval differs in that to consider the measurements as linear, one has to lift the problem to a higher dimensional space. In the process, a k-sparse signal is represented as a $k^2$-sparse rank 1 matrix and the sparsity constraint is relaxed to an $L1$ minimization in this space of matrices to maintain convexity. Does the lifting impose a fundamental limit of $\Omega(k^2 \log(n))$ measurements needed for recovery for every convex formulation of the problem? We conjecture that the number of measurements required to recover a k-sparse signal from gaussian quadratic measurements by any convex program is $\Omega(k^2 \text{polylog}(n))$. \\
 
Theorem \ref{thm2} shows the limitations of \eqref
{eq:traceL1min} in the sense of exact recovery, since we only need to recover the support of the unknown 
vector to recover $\vct{x}$  by using the PhaseLift algorithm \cite{PhaseLift2, ImprovedPhaseLift} to solve the resulting  
overdetermined system of quadratic equations. Mathematically, recovering the support is at least as easy as 
exact recovery. Can we do better  than \eqref{eq:traceL1min} by formulating the right support recovery problem? We leave these 
considerations for future research. \\

\small
\section{Acknowledgements}
We are thankful for fruitful discussion with Emmanuel Cand\`{e}s and also Mahdi Soltanolkotabi, who generously provided us with the results of his numerical experiments on sparse recovery.

\bibliographystyle{plain} 
\bibliography{ref}

\begin{thebibliography}{1}

\bibitem{Edidin}
R.~Balan, P.G. Casazza, and D.~Edidin.
\newblock On signal reconstruction without noisy phase.
\newblock {\em Appl.\ Comp.\ Harm.\ Anal.}, 20:345--356, 2006.

\bibitem{Boyd}
S.~Boyd and L.~Vandenberghe.
\newblock {\em Convex Optimization}.
\newblock Cambridge University Press, 2004.

\bibitem{CRT06}
E.~Cand{\`e}s, J.~Romberg, and T.~Tao.
\newblock Robust uncertainty principles: exact signal reconstruction from
  highly incomplete frequency information.
\newblock {\em IEEE Trans. Inform. Theory}, 52(2):489--509, 2006.

\bibitem{ImprovedPhaseLift}
E.~J. Cand{\`e}s. and X.~Li.
\newblock Solving quadratic equations via phaselift when there are about as
  many equations as unknowns.
\newblock {\em ArXiv e-prints}, August 2012.

\bibitem{PhaseLift2}
E.~J. Cand{\`e}s, T.~Strohmer, and V.~Voroninski.
\newblock Phaselift: Exact and stable signal recovery from magnitude
  measurements via convex programming.
\newblock To appear in {\em Comm. Pure Appl. Math.}, 2011.

\bibitem{Donoho06}
D.~Donoho.
\newblock Compressed sensing.
\newblock {\em IEEE Trans. Inform. Theory}, 52(4):1289 -- 1306, 2006.

\bibitem{Gross09}
D.~Gross.
\newblock Recovering low-rank matrices from few coefficients in any basis.
\newblock {\em IEEE Trans. on Information Theory}, 57(3):1548--1566, 2011.

\bibitem{CPR}
H.~Ohlsson, A.~Y. Yang, R.~Dong, and S.~S. Sastry.
\newblock Compressive phase retrieval from squared output measurements via
  semidefinite programming.
\newblock {\em ArXiv e-prints}, Mar 2012.

\bibitem{VershyninRMT}
R.~Vershynin.
\newblock Introduction to the non-asymptotic analysis of random matrices.
\newblock In Y.~C. Eldar and G.~Kutyniok, editors, {\em Compressed Sensing:
  Theory and Applications}, pages 210--268. Cambridge University Press, 2012.

\end{thebibliography}

\end{document}